\newtheorem{proof}{Proof}
\newcommand{\kk}{\mbox{$\cal K$}}
\newcommand{\oo}{\mbox{$\mathbb O$}}
\newcommand{\uu}{\mbox{$\mathbf u$}}
\newcommand{\x}{\mbox{$\mathbf x$}}
\newcommand{\y}{\mbox{$\mathbf y$}}
\newcommand{\vv}{\mbox{$\mathbf v$}}
\newcommand{\w}{\mbox{$\mathbf w$}}
\newcommand{\zz}{\mbox{$\mathbf z$}}
\newcommand{\s}{\mbox{$\mathbf s$}}
\newcommand{\gf}{\mbox{$\mathbf g$}}
\newcommand{\h}{\mbox{$\mathbf h$}}
\newcommand{\qq}{\mbox{$\mathbf q$}}
\newcommand{\ttt}{\mbox{$\cal T$}}
\newcommand{\f}{\mbox{$\cal F$}}
\newcommand{\p}{\mbox{$\cal P$}}
\newcommand{\sss}{\mbox{$\cal S$}}
\newcommand{\rrr}{\mbox{$\cal R$}}
\newcommand{\qa}{\mbox{\quad\mbox{and}\quad}}
\newcommand{\rt}{\mbox{$\mathbb R$}}
 \newcommand{\rank}{\mathrm{rank\;}}
\newcommand{\bxi}{\mbox{\boldmath $\xi$}}
 \def\diag{\mathop{{\rm diag}}\nolimits}
\newtheorem{theorem}{Theorem}
\newtheorem{proposition}{Proposition}
\newtheorem{example}{Example}
\newtheorem{remark}{Remark}
\journal{XXX}
\begin{document}

\begin{frontmatter}

\title{Extended Principal Component Analysis}

\author[mymainaddress,mysecondaryaddress]{Pablo Soto-Quiros}
\ead{juan.soto-quiros@mymail.unisa.edu.au}

\author[mymainaddress]{Anatoli~Torokhti\corref{mycorrespondingauthor}}
\cortext[mycorrespondingauthor]{Corresponding author}
\ead{anatoli.torokhti@unisa.edu.au}

\address[mymainaddress]{School of Information Technology and Mathematical Sciences, University of South Australia, SA 5095, Australia}
\address[mysecondaryaddress]{Instituto Tecnologico de Costa Rica, Apdo. 159-7050, Cartago, Costa Rica}



\begin{abstract}
Principal Component Analysis (PCA)  is a transform  for finding the principal components (PCs) that represent features of random data. PCA also provides a  reconstruction of the PCs to the original data.
We consider an extension of PCA which allows us to improve the associated accuracy and diminish the numerical load, in comparison with  known techniques.  This is achieved due to the  special structure of the proposed transform which contains two  matrices $T_0$ and $T_1$, and a special transformation $\f$ of the so called auxiliary random vector $\w$.   For this reason, we call it the three-term PCA. In particular, we show that the three-term PCA  always exists, i.e. is applicable to the case of singular data.
Both rigorous theoretical justification of the three-term PCA and simulations with real-world data are provided.

\end{abstract}

\begin{keyword} Principal component analysis \sep Least squares linear estimate \sep  Matrix computation \sep  Singular value decomposition.
\end{keyword}

\end{frontmatter}


\section{Introduction}\label{mot}

In this paper, an extension  of Principal Component Analysis (PCA) and its rigorous justification are considered. In comparison with known techniques, the proposed extension of PCA allows us to improve the associated accuracy and diminish the numerical load. The innovation of the proposed methodology, differences from the known results and advantages  are specified in Sections \ref{wn9w},  \ref{dif77},  \ref{x78an} and \ref{x78kl}.

 PCA is a technique for finding so called principal components (PCs) of the data of interest represented by a large random vector, i.e. components of a smaller vector which preserve the principal features of the data. PCA also provides a  reconstruction of PCs to the original data vector  with the least possible error among all linear transforms. According to Jolliffe \cite{Jolliffe2002}, ``Principal component analysis is probably the oldest and best known of the techniques of multivariate analysis''.
 This is a topic of  intensive  research which has  an enormous number of related references. For instance,  a Google search for `principal component analysis' returns about $8,160,000$ results. In particular, the  references which are most related to this paper  (from our point view) are represented in \cite{Brillinger2001,681430,905856,Tomasz1293,DuongNguyen2014288,tor5277,Scharf1991113}. PCA is used in a number  of application areas (in an addition to the previous references see, for example, \cite{diam96,du2006,Saghri2010,gao111}). Therefore, related techniques with a better performance are of vital importance.

By PCA in \cite{Jolliffe2002,Brillinger2001},  under the strong restriction of invertibility of the  covariance matrix, the procedures for finding the PCs and their reconstruction  are determined by  the matrix of the rank less or equal to $k$ where $k$ is the number of required PCs.
For a fixed number of  PCs, the PCA accuracy of their reconstruction to the original data cannot be improved. In other words, PCA has the {\em only degree of freedom} to control the accuracy, it is the number of PCs. Moreover, PCA in the form obtained, in particular, in \cite{Jolliffe2002,Brillinger2001}, is not applicable if the associated covariance matrix is singular. Therefore,  in \cite{tor843}, the generalizations of PCA,  called generalized Brillinger transforms (GBT1 and GBT2), have been developed. First, the GBT1 and GBT2 are applicable to the case of singular covariance matrices. Second, the GBT2 allows us to improve the errors associated with PCA and the GBT1. We call it the generic Karhunen-Lo\`{e}ve transform (GKLT).
The GKLT  requires  an additional knowledge of covariance matrices $E_{x y^2}$ and $E_{y^2 y^2}$ where $\x$ and $\y$ are stochastic vectors, and $\y^2$ is a Hadamard square of $\y$  (more details are provided in Section \ref{gbt1} below). Such  knowledge may be difficult. Another difficulty associated with the GBT2 and GKLT is their  numerical load which is larger than that of PCA.  Further, it follows from \cite{924074} that the GKLT accuracy is better than that of PCA in \cite{Jolliffe2002,Brillinger2001,681430} subject to the condition which is quite difficult to verify (see Corollary 3 in \cite{924074}). Moreover, for the GBT2 in \cite{tor843}, such an analysis has not been provided.

We are motivated by the development of an extension of PCA  which covers the above drawbacks.
We provide both a detailed theoretical analysis of the proposed extension of PCA and numerical examples that illustrate the theoretical results.


\section{Review of PCA and its known generalizations GB1, GB2, GKLT}\label{gbt1}

First, we introduce some notation which is used  below.

Let $\x=[\x_1,\ldots, \x_m]^T\in  L^2(\Omega,\mathbb{R}^m)$ and $\y=[\y_1,\ldots, \y_n]^T\in  L^2(\Omega,\mathbb{R}^n)$  be  random vectors\footnote{Here, $\Omega=\{\omega\}$ is the set of outcomes, $\Sigma$ a $\sigma$-field of measurable subsets of $\Omega$, $\mu:\Sigma\rightarrow[0,1]$ an associated probability measure on $\Sigma$ with $\mu(\Omega)=1$ and $(\Omega,\Sigma,\mu)$ for a probability space.}. Vectors $\x$ and $\y$ are interpreted as  reference data and  observable data, respectively, i.e.  $\y$ is a noisy version of $\x$. Dimensions $m$ and $n$ are assumed to be large.
  Suppose we wish to denoise  $\y$ and reduce it   to a `shorter'  vector $\uu\in  L^2(\Omega,\mathbb{R}^k)$ where $k \leq \min \{m, n\}$, and then  reconstruct  $\uu$ to vector $\widetilde{\x}$ such that $\widetilde{\x}$ is as close to $\x$ as possible. Entries  of vector $\uu$ are called the principal components (abbreviated above as PCs).

Let us write
$
\displaystyle \|{\bf x}\|^2_{\Omega} =\int_\Omega \|{\bf x}(\omega)\|_2^2 d\mu(\omega) < \infty,
 $
where $\|{\bf x}(\omega)\|_2$ is the Euclidean norm of ${\bf x}(\omega)\in\mathbb{R}^m$.
Throughout the paper, we assume that means $E[\x]$ and $E[\y]$ are known. Therefore, without loss of generality, we will assume henceforth that  $\x$ and $\y$ have zero means. Then the covariance matrix formed from $\x$ and  $\y$ is given by $E_{xy}=E[\x\y^T]=\{e_{ij}\}_{i,j=1}^{m,n}\in\rt^{m\times n}$ where $\displaystyle e_{ij} = \int_\Omega \x_i(\omega) \y_j(\omega)  d\mu(\omega)$.

Further, the singular value decomposition (SVD) of matrix  $M\in \rt^{m\times n}$ is given by  $M=U_M\Sigma_M V_M^T$ where  $U_M=[u_1 \;u_2\;\ldots u_m]\in \rt^{m\times m}, V_M=[v_1 \;v_2\;\ldots v_n]\in \rt^{n\times n}$  are unitary matrices, and  $\Sigma_M=\diag(\sigma_1(M),$ $\ldots,$ $\sigma_{\min(m,n)}(M))\in\rt^{m\times n}$
 is a generalized diagonal matrix,  with the singular values $\sigma_1(M)\ge \sigma_2(M)\ge\ldots\ge 0$ on the main diagonal.
Further, $M^{\dag}$ denotes the Moore-Penrose pseudo-inverse matrix for matrix $M$.

The generalizations of PCA mentioned in Section \ref{mot}, GBT1 and GBT2  \cite{tor843}, are represented as follows. Let us first consider the GBT2  since the GBT1 is a particular case of the GBT2. The GBT2 is given by
 \begin{eqnarray}\label{bnm11}
B_2(\y)=R_1 [P_1\y + P_2\vv],
 \end{eqnarray}
where $\vv\in L^2(\Omega,\mathbb{R}^n)$ is an `auxiliary' random vector used to further optimize the transform, {\bf (Assoc. Edit.: More explanations!!!) } and matrices  $R_1\in\rt^{m\times k}$, $P_1\in\rt^{k\times n}$ and $P_2\in\rt^{k\times n}$ solve the problem\footnote{Note that in (\ref{bnm11}) and (\ref{byfx}), strictly speaking, $R_1$ $P_1$ and $P_2$ should be replaced with  operators $\rrr_1: L^2(\Omega,\mathbb{R}^k) \rightarrow L^2(\Omega,\mathbb{R}^m)$, $\p_1: L^2(\Omega,\mathbb{R}^n) \rightarrow L^2(\Omega,\mathbb{R}^k)$ and $\p_2: L^2(\Omega,\mathbb{R}^n) \rightarrow L^2(\Omega,\mathbb{R}^k)$, respectively.  This is because  each matrix, say, $R_1\in\rt^{m\times k}$ defines a bounded linear transformation $\rrr_1: L^2(\Omega,\mathbb{R}^k) \rightarrow L^2(\Omega,\mathbb{R}^m)$. Nevertheless, it is customary to write $R_1$  rather then $\rrr_1$, since  $[\rrr_1(\uu)](\omega) = R_1[\uu(\omega)]$, for each $\omega\in \Omega$. We keep this type of notation throughout the paper.}
\begin{equation}\label{byfx}
\min_{R,\hspace*{1mm} [P_1 P_2]} \displaystyle \|\x - R_1 [P_1\y + P_2\vv]\|_\Omega^2
\end{equation}
so that, for  $\qq=[\y^T \vv^T]^T\in   L^2(\Omega,\mathbb{R}^{2n})$ and $G_q = E_{x q}E_{qq}^{\dag}E_{q x}$,
  \begin{eqnarray}\label{ac11}
  R_1= U_{G_q,k},\hspace*{4mm}
  [P_1, P_2] = \hspace*{-1mm}U_{G_q,k}^T E_{x q}E_{qq}^{\dag},
\end{eqnarray}
where  $U_{G_q,k}$ is formed by the first $k$ columns of $U_{G_k}$, and $P_1$ and $P_2$ are represented by the corresponding blocks of matrix $U_{G_q,k}^T E_{x q}E_{qq}^{\dag}$.
The principal components are then given by $\uu = [P_1, P_2][\y^T \vv^T]^T$.

The GBT1 follows from the GBT2 if $R_1 P_2\vv=\mathbf 0$, where $\mathbf 0$ is the zero vector. That is, the GBT2 has one matrix more (i.e., one degree of freedom more) than the GBT1. This allows us to improve the GBT2 performance compared to that by the GBT1 (see \cite{tor843} for more detail).

The GKLT \cite{924074} is given by
 \begin{eqnarray}\label{qmiw8}
\kk (\y)=K_1\y + K_2\y^2,
 \end{eqnarray}
where matrices $K_1\in\rt^{m\times n}$ and $K_2\in\rt^{m\times n}$ solve the problem
\begin{equation}\label{by29}
\min_{ \substack{[K_1 K_2]\\ \rank [K_1 K_2]\leq k}} \displaystyle \|\x - [K_1 \y + K_2\y^2]\|_\Omega^2,
\end{equation}
and $\y^2$ is given by the Hadamard square  so that $\y^2(\omega) = [\y_1^2(\omega),\ldots,\y_n^2(\omega)]^T$, for all $\omega\in\Omega$.

PCA is a particular case of the GKLT if $K_2\y^2={\mathbf 0}$ and matrix $E_{yy}$ is non-singular.
The PCA, BT, GBT1, GKLT and GBT2    follow from the solution of essentially the same optimization problem. The differences are that first, the associated solutions are obtained under different assumptions  and second,  the solutions result in transforms that have  different computational schemes. More details can be found in \cite{tor843}.
Further, each of PCA and the GBT1  has $m\times n$ parameters to optimize, which are entries of matrices $K_1$ and $R_1P_1$, respectively.   Similar to PCA, the GBT1 has  one degree of freedom to improve the associated accuracy,  it is the number $k$ of PCs, i.e., the dimension  of vector $\uu$. Thus, for fixed $k$, the GBT1 accuracy cannot be improved.
 The GKLT \cite{924074} and GBT2 \cite{tor843} each have two degrees of freedom, $k$ and  one matrix more than in PCA and GBT1. That is, the GKLT and GBT2 have twice as many parameters to optimize compared to PCA and GBT1. It is shown in \cite{tor843,924074} that this feature allows us to improve the accuracy associated with  the GKLT and GBT2.


\section{Contribution and novelty}\label{wn9w}

We propose and justify the PCA extension which\\
\hspace*{5mm} $\bullet$ always exists, i.e. is applicable to the case of singular data (this is because it is constructed in terms of pseudo-inverse matrices; see Section \ref{det}),\\
\hspace*{5mm} $\bullet$ has better associated accuracy than that of the GBT1, GBT2 and GKLT (Sections \ref{298an}, \ref{1n8an}, \ref{nbm198}), \\
\hspace*{5mm} $\bullet$ has more degrees of freedom to improve the associated accuracy than the PCA, GBT1, GBT2 and GKLT (Sections  \ref{xvb8an} and  \ref{speccases}),\\
\hspace*{5mm} $\bullet$  has a lower computational load than that of the GBT2 and GKLT; in fact, for large $m, n$, it is  about 37\% of  that of the GBT2 and 22\% of that of the GKLT (Section \ref{cm9vn}),\\
\hspace*{5mm} $\bullet$ does not require the usage of matrices $E_{x y^2}$ and $E_{y^2 y^2}$ (as required in \cite{924074}) which are difficult to determine.

Further, we show, in particular,  that \\
\hspace*{5mm} $\bullet$ the  condition for the GKLT  \cite{924074} mentioned in Section \ref{gbt1} (under which the accuracy improvement is achieved) can be omitted (Section \ref{298an}).

In more detail, in  the proposed PCA extension, the additional degrees of freedom are provided by the auxiliary random vectors $\w$  and $\h$ which are introduced below in  Sections \ref{w0an} and  \ref{xvb8an}, respectively.  Vectors $\w$ and $\h$ are called  $\w$-injection and $\h$-injection. An improvement in the accuracy of the proposed transform follows from the increase in the number of parameters to optimize, which are represented by matrices $T_0$, $T_1$, specific vector transformation $\f$ (Sections \ref{w0an} and \ref{det1}), and   $\w$-injection and $\h$-injection (Sections \ref{stat},  \ref{x78an} and \ref{speccases} ).

\section{Structure of the proposed PCA extension}\label{w0an}

The above advantages are achieved due to the special structure of the proposed transform  as follows.
   Let $\w\in  L^2(\Omega,\mathbb{R}^\ell)$ be a random vector  and $\f: L^2(\Omega,\mathbb{R}^n)\times L^2(\Omega,\mathbb{R}^\ell) \rightarrow L^2(\Omega,\mathbb{R}^\ell)$ be a transformation of $\y$ and $\w$ in a random vector  $\s\in  L^2(\Omega,\mathbb{R}^\ell)$, i.e.
$$
\s = \f(\y,\w).
$$
Reasons for using vector $\w$ and transformation $\f$ are detailed in Sections \ref{stat}, \ref{x78an} and \ref{speccases} below.

 We propose to determine the PCs and their reconstruction  $\widetilde{\x}$ by the  transform $\ttt $ given by
\begin{eqnarray}\label{sjk92}
\widetilde{\x}=\ttt (\y, \w) = T_0\y +  T_1 \f(\y, \w),
\end{eqnarray}
 where $T_0$ and $T_1$ are represented by $m\times n$ and $m\times \ell$ matrices, respectively, and
\begin{eqnarray}\label{xnm93}
\rank [T_0 \hspace*{1mm} T_1] \leq k,
 \end{eqnarray}
  where $k=\min\{m, n\}$. Here, three terms, $T_0$, $T_1$ and $\f$, are to be determined.
 Therefore, transform  $\ttt$ will be called the three-term $k$-rank transform.

A special version of the three-term $k$-rank transform is considered in Section \ref{xvb8an}  below.

\section{Statement of the problems}\label{stat}

 Below, we assume that $\x$, $\y$ and $\w$ are nonzero vectors.

{\em Problem 1}: Find matrices $T_0$ and $T_1$ that solve
\begin{eqnarray}\label{589mb}
\min_{[T_0 \hspace*{1mm} T_1]}\|\x - [ T_0\y +  T_1 \f(\y, \w)]\|^2_\Omega
 \end{eqnarray}
subject to constraint (\ref{xnm93}), and determine $\f$ that provides, for $\zz = [\y^T \s^T]^T$,
 \begin{eqnarray}\label{akl7}
E_{zz} =\left[ \begin{array}{cc}
                         E_{yy}  & \oo\\
                         \oo  & E_{ss}
                           \end{array} \right],
 \end{eqnarray}
 where $\oo$ denotes the zero matrix.
 The importance of the condition in (\ref{akl7}) is twofold. First, this allows us to facilitate computation associated with a determination of $T_0$ and $T_1$. Second, the condition in (\ref{akl7}) is used in the solution of the Problem 2 stated below.

 The transform obtained from the solution of Problem 1 (see Section \ref{det1}  that follows) is called the {\em optimal} three-term $k$-rank transform or the {\sf\em three-term PCA.}

 {\em Problem 2}: Show that the error associated with the  three-term PCA  is less than that of the PCA, GBT1 (see Section \ref{298an}),  GBT2 and GKLT (see Section \ref{xvb8an}). Further, show that the computational load associated with the tree-term PCA is less than that of the GBT2  (see Section \ref{58b29}).

\section{Differences from known techniques}\label{dif77}

The proposed three-term PCA differs from PCA in \cite{Jolliffe2002,Brillinger2001} in several instances. Unlike PCA
in \cite{Jolliffe2002,Brillinger2001},  the three-term PCA has the additional terms $T_1$, $\f$ and  $\w$-injection, which lead to the improvement in the associated  accuracy of determining PCs and  the consecutive reconstruction of PCs to the original vector. As distinct from the  PCA in \cite{Jolliffe2002,Brillinger2001}, the three-term PCA is always applicable to singular data since it is determined in terms of pseudo-inverse matrices.

Differences of the three-term PCA from the GBT2 are threefold. First, the three-term PCA contains  transformation $\f$ aimed to facilitate computation. Second, in the three-term PCA,
the procedure for determining principal components and their reconstruction to an estimate of $\x$ is different from that in the GBT2. Indeed, the three-term PCA can be written as
\begin{eqnarray}\label{sj382}
\ttt (\y, \w) = R_1 P_1\y +  R_2 P_2\s,
 \end{eqnarray}
where $R_1, R_2$, $P_1$ and $P_2$ are obtained in the form different from those in the GBT2 (see Theorem \ref{389nm} below).
 Third, in Section \ref{xvb8an} that follows, we show that a special transformation of vector $\s$ to a vector $\widetilde{\s}$ of a greater dimensionality allows us to achieve the better associated accuracy of $\x$ estimation.
Differences from the GKLT in (\ref{qmiw8}) are similar and  even stronger since the GKLT contains vector $\y^2$ (not $\vv$ as the GBT2) which cannot be changed.

The above differences imply the  improvement in the performance of the three-term PCA. This issue is detailed  in Sections \ref{x78an}  and \ref{x78kl} that follow.

\section{Solution of Problem 1}\label{det}

\subsection{Preliminaries}\label{prel}

First, we recall some known  results that will be used in the solution of Problems 1 and 2.

\begin{proposition} {\em \cite[Theorem 1.21, p. 44]{zhang2005schur}}\label{proposition5}
Let $M$ be a positive semi-definite matrix given in the block form
$
M=\left[
     \begin{array}{cc}
       A & B \\
       B^T & C \\
     \end{array}
   \right],
   $
where blocks $A$ and $C$ are square. Let $S=C-B^TA^\dagger B$ and
$$
N =\left[
     \begin{array}{cc}
       A^\dagger +A^\dagger BS^\dagger B^TA^\dagger  & -A^\dagger BS^\dagger  \\
       -S^\dagger B^TA^\dagger  & S^\dagger  \\
     \end{array}
   \right].
   $$
Then $N=M^\dagger$ if and only if $\mbox{rank}(M)=\mbox{rank}(A)+\mbox{rank}(C)$.
\end{proposition}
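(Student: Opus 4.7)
The plan is to verify that $N$ satisfies the four Penrose conditions $MNM=M$, $NMN=N$, $(MN)^T=MN$ and $(NM)^T=NM$ that uniquely characterize $M^\dagger$. Positive semi-definiteness of $M$ yields the range inclusions $\mathrm{range}(B)\subseteq\mathrm{range}(A)$ and $\mathrm{range}(B^T)\subseteq\mathrm{range}(C)$, which I will use in the algebraic form
$$AA^\dagger B = B \qa B^T A^\dagger A = B^T,$$
together with the fact that the Schur complement $S = C - B^T A^\dagger B$ is itself positive semi-definite. These are the only structural properties of $M$ that I will need.

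The first step is a blockwise expansion of $MN$ and $NM$. Using the identities above and $C - B^T A^\dagger B = S$, a direct calculation gives
$$MN = \begin{bmatrix} AA^\dagger & 0 \\ (I - SS^\dagger)B^T A^\dagger & SS^\dagger \end{bmatrix}, \qquad NM = \begin{bmatrix} A^\dagger A & A^\dagger B (I - S^\dagger S) \\ 0 & S^\dagger S \end{bmatrix}.$$
Multiplying once more and invoking $A^\dagger A A^\dagger = A^\dagger$ and $S^\dagger S S^\dagger = S^\dagger$, the two Penrose identities $MNM = M$ and $NMN = N$ fall out \emph{unconditionally}. Hence whether $N = M^\dagger$ reduces entirely to whether $MN$ and $NM$ are symmetric, and reading off the off-diagonal blocks shows that both symmetry conditions collapse to the single requirement
\begin{equation}\label{rangecond}
(I - SS^\dagger) B^T A^\dagger = 0, \qquad \mbox{i.e.,} \qquad \mathrm{range}(B^T A^\dagger) \subseteq \mathrm{range}(S).
\end{equation}

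It remains to show that (\ref{rangecond}) is equivalent to the rank hypothesis $\rank(M) = \rank(A) + \rank(C)$. The block factorization
$$M = \begin{bmatrix} I & 0 \\ B^T A^\dagger & I \end{bmatrix}\begin{bmatrix} A & 0 \\ 0 & S \end{bmatrix}\begin{bmatrix} I & A^\dagger B \\ 0 & I \end{bmatrix},$$
whose outer factors are invertible (unit triangular), gives the unconditional identity $\rank(M) = \rank(A) + \rank(S)$. Hence the hypothesis is equivalent to $\rank(S) = \rank(C)$. Since $C = S + B^T A^\dagger B$ with both summands PSD, one has $C \succeq S \succeq 0$ and therefore $\mathrm{range}(S) \subseteq \mathrm{range}(C)$; thus $\rank(S) = \rank(C)$ is in fact equivalent to $\mathrm{range}(S) = \mathrm{range}(C)$. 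Given this equality of ranges, $\mathrm{range}(B^T A^\dagger) \subseteq \mathrm{range}(B^T) \subseteq \mathrm{range}(C) = \mathrm{range}(S)$ delivers (\ref{rangecond}). Conversely, (\ref{rangecond}) gives $\mathrm{range}(B^T A^\dagger B) \subseteq \mathrm{range}(S)$, whence $C = S + B^T A^\dagger B$ forces $\mathrm{range}(C) \subseteq \mathrm{range}(S)$ and the ranges coincide.

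The main obstacle will be the bookkeeping in the first step: the pseudoinverse identities must be inserted in the right order (the usual pitfall being that $AA^\dagger$ commutes only with matrices whose columns lie in $\mathrm{range}(A)$, not with arbitrary ones). Once the two products $MN$ and $NM$ are in hand, the equivalence between the range inclusion (\ref{rangecond}) and the rank condition is a short consequence of the L\"owner order.
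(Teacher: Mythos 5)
The paper offers no proof of this proposition: it is imported verbatim from the cited monograph of Zhang (Theorem 1.21 there), so there is no in-paper argument to compare against. Your blind proof is correct and self-contained. I checked the block computations: with $AA^\dagger B=B$, $B^TA^\dagger A=B^T$ and $B^TA^\dagger B=C-S$ one indeed gets $MN=\left[\begin{smallmatrix} AA^\dagger & 0\\ (I-SS^\dagger)B^TA^\dagger & SS^\dagger\end{smallmatrix}\right]$, the identities $MNM=M$ and $NMN=N$ hold unconditionally (the latter using $SS^\dagger S^\dagger=S^\dagger$), and since $N$ is symmetric, $NM=(MN)^T$, so both symmetry conditions reduce to the single equation $(I-SS^\dagger)B^TA^\dagger=0$. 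The congruence $M=L\,\mathrm{diag}(A,S)\,L^T$ with unit-triangular $L$ is valid precisely because $AA^\dagger B=B$, giving $\rank M=\rank A+\rank S$, and the L\"owner argument $0\preceq S\preceq C$ correctly converts $\rank S=\rank C$ into $\mathrm{range}(S)=\mathrm{range}(C)$, from which both directions of the equivalence follow as you state. The only facts you use without proof --- $\mathrm{range}(B)\subseteq\mathrm{range}(A)$, $\mathrm{range}(B^T)\subseteq\mathrm{range}(C)$ and $S\succeq 0$ for PSD $M$ --- are standard and easily supplied (e.g.\ by evaluating the quadratic form at $x=-A^\dagger By$), so the argument is complete.
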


\begin{proposition}{\em \cite[p. 217]{Zhang2011}}\label{proposition6}
  If $M$ is positive definite, then the condition $\mbox{rank} (M)=\mbox{rank} (A)+\mbox{rank} (C)$ of Proposition \ref{proposition5} is always true and $N=M^{-1}$.
\end{proposition}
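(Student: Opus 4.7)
The plan is to exploit the fact that positive definiteness passes to principal submatrices and makes the Moore--Penrose pseudoinverse coincide with the ordinary inverse. First, I would note that since $M$ is positive definite, every principal submatrix of $M$ is positive definite; in particular, the square diagonal blocks $A$ and $C$ are positive definite and therefore invertible. Consequently, $\rank A$ equals the order of $A$, $\rank C$ equals the order of $C$, and $\rank M$ equals the order of $M$, which is the sum of the orders of $A$ and $C$. This immediately yields $\rank M = \rank A + \rank C$, so the rank hypothesis of Proposition \ref{proposition5} holds automatically.

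With the rank condition verified, Proposition \ref{proposition5} applies and gives $N = M^\dagger$. Since $M$ is positive definite it is invertible, and the Moore--Penrose pseudoinverse of an invertible matrix coincides with the ordinary inverse, whence $N = M^{-1}$. I would also remark that the formula for $N$ simplifies in this setting: $A^\dagger = A^{-1}$ because $A$ is invertible, and $S^\dagger = S^{-1}$ because the Schur complement $S = C - B^T A^{-1} B$ of a positive definite block matrix is itself positive definite. The latter is a standard consequence of the identity that expresses $M$ as congruent to $\diag(A,S)$ via a block-triangular transformation, so positivity of $M$ transfers to positivity of both $A$ and $S$.

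There is really no main obstacle; the proof is immediate once one recalls the standard fact that a principal submatrix of a positive definite matrix is positive definite, which itself follows by restricting the quadratic form $x^T M x$ to vectors supported on the relevant index set. The only care required is to note explicitly why the pseudoinverses appearing in the expression for $N$ reduce to ordinary inverses, so that the statement of Proposition \ref{proposition6} can be read as the classical block-inversion formula.
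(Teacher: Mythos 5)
Your argument is correct. The paper states this proposition as a cited result from Zhang's book without giving a proof, and your reasoning is exactly the standard one: positive definiteness of $M$ passes to the principal submatrices $A$ and $C$, so all three matrices have full rank and the rank condition $\mbox{rank}(M)=\mbox{rank}(A)+\mbox{rank}(C)$ holds trivially; then Proposition \ref{proposition5} gives $N=M^\dagger$, which equals $M^{-1}$ since $M$ is invertible. Your additional remarks about $A^\dagger=A^{-1}$ and the positive definiteness of the Schur complement $S$ are also correct and reduce $N$ to the classical block-inversion formula.
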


\begin{proposition}{\em \cite[Lemma 4.5.11]{harville2008matrix}}\label{proposition10}
For any matrices $A$ and $B$,
\begin{eqnarray}\label{zmnb91}
\mbox{\rm rank}\left[
                 \begin{array}{cc}
                   A & \oo \\
                   \oo & B \\
                 \end{array}
               \right]=\mbox{\rm rank}(A)+\mbox{\rm rank}(B).
\end{eqnarray}
\end{proposition}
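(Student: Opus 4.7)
The plan is to prove the identity by interpreting rank as the dimension of the column space and exhibiting the column space of the block-diagonal matrix as an internal direct sum of two subspaces of dimensions $\rank(A)$ and $\rank(B)$, respectively.

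First, I would write $M = \left[\begin{array}{cc} A & \oo \\ \oo & B \end{array}\right]$ with the first block-row having the same number of rows as $A$ and the second the same number of rows as $B$. Every column of $M$ is either of the form $[a^T, \bfzero^T]^T$ with $a$ a column of $A$, or of the form $[\bfzero^T, b^T]^T$ with $b$ a column of $B$. Consequently, the column space of $M$ decomposes as $V_A + V_B$, where $V_A$ is the embedding of the column space of $A$ into the upper block of coordinates and $V_B$ is the embedding of the column space of $B$ into the lower block.

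Second, I would observe that $V_A \cap V_B = \{\bfzero\}$, since any vector lying in both subspaces has both its upper and lower blocks equal to zero. Hence the sum is an internal direct sum, and
$$
\dim(V_A + V_B) \;=\; \dim V_A + \dim V_B \;=\; \rank(A) + \rank(B),
$$
which is precisely the required equality.

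The argument contains no real obstacle: the crux is the elementary observation that disjoint coordinate blocks give rise to subspaces whose sum is automatically direct. A secondary route, which I would keep as a sanity check, is to stack the SVDs of $A$ and $B$ so as to realize $M$ as a product of two block-diagonal orthogonal matrices sandwiching a block-diagonal matrix of singular values; the rank count is then immediate, since the middle factor has exactly $\rank(A) + \rank(B)$ nonzero diagonal entries.
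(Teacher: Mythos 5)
Your proof is correct. The paper offers no proof of this proposition at all --- it is quoted verbatim from Harville's \emph{Matrix Algebra from a Statistician's Perspective} (Lemma 4.5.11) and used as a black box --- so there is nothing to compare against line by line. Your direct-sum argument (the column space of the block-diagonal matrix is $V_A \oplus V_B$ with the intersection trivial because the two subspaces live in disjoint coordinate blocks) is the standard proof and is complete as written; the SVD-stacking sanity check is also valid.
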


\begin{proposition} {{\em (Weyl's inequality)}} {\em \cite[Corollary 4.3.15]{9780511810817}}\label{proposition8}
Let $A$ and $B$ be $m\times m$ symmetric matrices and let singular values  $\sigma_i(A)$, $\sigma_i(B)$ and $\sigma_i(A+B)$, for $i=1,\ldots, m$, be arranged in decreasing order. Then, for $i=1,\ldots, m$,
\begin{eqnarray}\label{7hb91}
\sigma_i(A)+\sigma_{m}(B)\leq \sigma_i(A+B)\leq \sigma_i(A)+\sigma_{1}(B).
\end{eqnarray}
\end{proposition}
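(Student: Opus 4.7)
The plan is to deduce (\ref{7hb91}) from the Courant--Fischer min--max characterization of eigenvalues of real symmetric matrices, which I will take as a classical fact. Since $A$, $B$, and $A+B$ are symmetric, and (in the setting in which this proposition is invoked later in the paper --- matrices built from covariances) positive semi-definite, the singular values coincide with the eigenvalues ordered decreasingly, so that $\sigma_i(M)=\lambda_i(M)$ and one may use the identity
\[
\sigma_i(M)=\max_{\dim S=i}\ \min_{x\in S,\, \|x\|_2=1} x^T M x.
\]

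For the upper bound in (\ref{7hb91}) I would fix an arbitrary $i$-dimensional subspace $S\subseteq\rt^m$ and a unit vector $x\in S$, and write $x^T(A+B)x = x^TAx + x^TBx \le x^TAx + \sigma_1(B)$, using the Rayleigh identity $\sigma_1(B)=\max_{\|y\|_2=1}y^TBy$. Taking $\min_{x\in S,\|x\|_2=1}$ on both sides (the constant $\sigma_1(B)$ factors through the minimum) and then $\max$ over $i$-dimensional $S$ yields $\sigma_i(A+B)\le \sigma_i(A)+\sigma_1(B)$, which is the right half of (\ref{7hb91}).

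The lower bound would then follow by a duality trick: apply the just-proved upper bound to the tautological decomposition $A=(A+B)+(-B)$, obtaining $\sigma_i(A)\le\sigma_i(A+B)+\sigma_1(-B)$. Since the largest eigenvalue of $-B$ equals the negative of the smallest eigenvalue of $B$, i.e.\ $\sigma_1(-B)=-\sigma_m(B)$, rearrangement delivers $\sigma_i(A)+\sigma_m(B)\le\sigma_i(A+B)$, which is the left half of (\ref{7hb91}).

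The main obstacle is conceptual rather than computational: one must align the ordering conventions between singular values and eigenvalues, which is a genuine subtlety for symmetric matrices admitting negative eigenvalues, since then $\sigma_i(M)=|\lambda_{\pi(i)}(M)|$ for a permutation $\pi$ that sorts absolute values. Restricting attention to positive semi-definite $A$ and $B$ removes this ambiguity, and this restriction is compatible with every downstream use of the proposition in the paper. Since the statement is quoted verbatim from Horn and Johnson, no new technique is required beyond making the Courant--Fischer step explicit and handling the sign/order issue correctly.
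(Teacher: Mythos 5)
The paper offers no proof of this proposition at all --- it is quoted as a known result from Horn and Johnson, whose own argument is precisely the Courant--Fischer one you give --- and your two-step derivation (upper bound from the min--max characterization, lower bound by applying it to $A=(A+B)+(-B)$) is correct and complete. Your closing remark is also well taken: as literally stated for \emph{singular} values of general symmetric matrices the left-hand inequality can fail (e.g.\ $A=-B=I$ gives $\sigma_i(A)+\sigma_m(B)=2>0=\sigma_i(A+B)$), so the proposition should be read for eigenvalues, which coincide with singular values for the positive semi-definite matrices $G_y$, $G_s$, $G_z$ to which the paper actually applies it.
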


\subsection{Determination of three-term PCA}\label{det1}

  Let $$\displaystyle P_{M,L}=\hspace*{-4mm}\sum_{k=1}^{\rank (M)}u_k u_k^T\in \rt^{m\times m}, \hspace*{3mm} \displaystyle  P_{M,R}=\hspace*{-4mm} \sum_{j=1}^{\rank (M)}v_j v_j^T\in \rt^{n\times n}$$
 be the orthogonal projections on the range of matrices $M$ and $M^T$ respectively, and let
\begin{eqnarray}\label{mkumk1}
 [M]_k= \sum_{i=1}^{k}\sigma_i(M)u_i v_i^T\in \rt^{m\times n}
 \end{eqnarray}
  for $k=1,\ldots,\rank (M)$, be the truncated SVD of $M$.
 For $k>\rank (M)$, we define $[M]_k=M\;(=M_{\rank (M)})$. For $1\le k<\rank (M)$, the matrix $M_k$ is uniquely defined
 if and only if $\sigma_k(M)>\sigma_{k+1}(M)$.

Further, $M^{1/2}$ denotes a matrix square root for matrix $M$.
For the covariance matrix $E_{x x}$, we denote $E_{x  x}^{1/2 \dag}:=(E_{ x x}^{1/2})^ {\dag}$. Matrix $E_{ x x}^{1/2 \dag}$ is unique since $E_{x x}$ is positive semidefinite.
The Frobenius matrix norm is denoted by $\|\cdot\|$.

Let us denote  $G_{xy}=E_{xy}E_{yy}^\dag$ and  $G_z = E_{x z}E_{zz}^{\dag}E_{z x}$. Similar to $U_{G_q,k}$ in (\ref{ac11}), $U_{G_z,k}$ denotes the matrix  formed by the first $k$ columns of $U_{G_z}$. Recall that, as before in (\ref{akl7}), $\zz=[\y^T \s^T]^T$.

\begin{theorem}\label{389nm}
Let  transformation $\f$ in (\ref{sjk92}) be determined by
\begin{eqnarray}\label{smi91}
\f(\y, \w) = \s = \w - G_{wy} \y.
 \end{eqnarray}
 Then (\ref{akl7}) is true, and $T_0$ and $T_1$ that solve the problem in (\ref{589mb}), (\ref{xnm93}) are such that
 \begin{eqnarray}\label{z,.23}
[T_0\hspace*{1.5mm} T_1] = U_{G_z,k}U_{G_z,k}^T [G_{xy}\hspace*{1.5mm} G_{xs}](I + N)
 \end{eqnarray}
where
 $N= M(I - P_{E_{zz}^{1/2},L})$ and matrix $M$ is arbitrary\footnote{In other words, the solution is not unique.}. The unique minimum norm solution of problem (\ref{589mb}), (\ref{xnm93}) is given by
 \begin{eqnarray}\label{294b}
T_0 = U_{G_z,k}U_{G_z,k}^T G_{xy}\qa T_1 = U_{G_z,k}U_{G_z,k}^T  G_{xs},
 \end{eqnarray}
 where
 \begin{eqnarray}\label{2mox5}
G_z = G_y + G_s.
 \end{eqnarray}
 \end{theorem}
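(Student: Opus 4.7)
The plan is to verify the three claims in sequence: (i) that the choice $\s = \w - G_{wy}\y$ decorrelates $\y$ and $\s$, giving (\ref{akl7}); (ii) that under this decorrelation the rank-constrained least squares problem reduces to a standard reduced-rank regression whose solution is given by the truncated SVD; and (iii) that the block-diagonal structure of $E_{zz}$ splits the answer cleanly into the formulas in (\ref{294b}) with $G_z = G_y + G_s$.

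First I would compute
$$E_{sy} = E_{wy} - G_{wy} E_{yy} = E_{wy}\bigl(I - E_{yy}^{\dag} E_{yy}\bigr).$$
The standard identity $E_{wy} E_{yy}^{\dag} E_{yy} = E_{wy}$ (which follows from $\ker E_{yy} \subseteq \ker E_{wy}^T$, since any vector in $\ker E_{yy}$ is annihilated by $\y$ almost surely, and hence by $E_{wy}^T = E[\y\w^T]$) then forces $E_{sy} = \oo$. Together with $E_{ys} = E_{sy}^T$ this gives the block-diagonal form (\ref{akl7}) and, as a by-product, $E_{ss} = E_{ww} - G_{wy} E_{yw}$ and $E_{xs} = E_{xw} - E_{xy} G_{wy}^T$.

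Second, setting $T=[T_0\ T_1]$ the problem (\ref{589mb})--(\ref{xnm93}) becomes $\min_{\operatorname{rank} T \le k}\|\x - T\zz\|_\Omega^2$. Expanding and using the analogous identity $E_{xz} = E_{xz} E_{zz}^{\dag} E_{zz}$ to absorb the cross term, I complete the square to obtain
$$\|\x - T\zz\|_\Omega^2 \;=\; \bigl\|E_{xz} E_{zz}^{1/2\,\dag} \;-\; T E_{zz}^{1/2}\bigr\|^2 \;+\; C,$$
with $C$ independent of $T$. Then I apply the Eckart--Young theorem to $A := E_{xz} E_{zz}^{1/2\,\dag}$ and $B := T E_{zz}^{1/2}$: since $A A^T = E_{xz} E_{zz}^{\dag} E_{zx} = G_z$, the dominant left singular vectors of $A$ are the columns of $U_{G_z,k}$, so the best rank-$k$ approximation is $[A]_k = U_{G_z,k} U_{G_z,k}^T A$. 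Hence any minimizer must satisfy $T E_{zz}^{1/2} = U_{G_z,k} U_{G_z,k}^T E_{xz} E_{zz}^{1/2\,\dag}$. Since $(I - P_{E_{zz}^{1/2},L}) E_{zz}^{1/2} = \oo$, the general rank-$k$ solution is precisely the family $T = U_{G_z,k} U_{G_z,k}^T E_{xz} E_{zz}^{\dag}(I + N)$ with $N = M(I - P_{E_{zz}^{1/2},L})$ arbitrary, which is (\ref{z,.23}); the minimum-norm solution corresponds to $M = \oo$.

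Finally I would use the block-diagonal form of $E_{zz}$: by Propositions \ref{proposition5}--\ref{proposition10}, $E_{zz}^{\dag} = \mbox{diag}(E_{yy}^{\dag}, E_{ss}^{\dag})$, so $E_{xz} E_{zz}^{\dag} = [G_{xy}\ G_{xs}]$ partitions exactly into the two expressions in (\ref{294b}), and $G_z = E_{xy} E_{yy}^{\dag} E_{yx} + E_{xs} E_{ss}^{\dag} E_{sx} = G_y + G_s$, which is (\ref{2mox5}). The main obstacle I expect is the reduced-rank regression step in the singular case: one must justify the completion-of-square identity carefully when $E_{zz}$ has non-trivial kernel and then parameterize the non-unique part of the minimizer through the orthogonal projector onto the complement of $\operatorname{range}(E_{zz}^{1/2})$ without violating the rank-$k$ constraint. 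Once this reduced-rank regression identity is in hand, the decorrelation and block-split are routine bookkeeping.
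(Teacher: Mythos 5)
Your proposal is correct and follows essentially the same route as the paper: decorrelation of $\y$ and $\s$ via the projection identity $E_{wy}E_{yy}^{\dag}E_{yy}=E_{wy}$, completion of the square to reduce (\ref{589mb})--(\ref{xnm93}) to $\min_{\operatorname{rank}T\le k}\|E_{xz}E_{zz}^{1/2\,\dag}-TE_{zz}^{1/2}\|^2$, the truncated-SVD solution with the $N=M(I-P_{E_{zz}^{1/2},L})$ parameterization, and the block-diagonal split giving $G_{xz}=[G_{xy}\ G_{xs}]$ and $G_z=G_y+G_s$. The only difference is that you prove from scratch (via the kernel argument and the row-space check on $[A]_k$) the two facts the paper simply cites from its references, and you correctly flag the singular-$E_{zz}$ subtlety that the citation to \cite{tor5277} is meant to cover.
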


\begin{proof}
For vector $\s$ defined by (\ref{smi91}),
$$
E_{ys} = E[\y (\w - E_{wy}E_{yy}^\dag \y)^T]=E_{yw} - E_{yy}E_{yy}^\dag E_{yw}=\oo
$$
because by Corollary 1 in {\em\cite{924074}}, $E_{yw} = E_{yy}E_{yy}^\dag E_{yw}$. Then (\ref{akl7}) follows. Further, since $\|\x\|^2_\Omega = \mbox{\em tr} \hspace*{.5mm}E[\x \x^T]$ (see {\em\cite[pp. 166-167]{torbook2007}}) then, for $T=[T_0\hspace*{.5mm} T_1]$,
 \begin{eqnarray}\label{wnmui}
\|\x - [ T_0\y +  T_1 \f(\y, \w)]\|^2_\Omega &=& \|\x - T\zz\|^2_\Omega\nonumber\\
&=&  \mbox{\em tr}\hspace*{.5mm} E\{(\x - T\zz) (\x - T\zz)^T\}\nonumber\\
&=& \|E_{xx}^{1/2}\|^{2} -  \|E_{xz}{E_{zz}^{1/2}}^\dag\|^{2} \nonumber\\
& &\hspace*{17mm} + \|E_{xz}{E_{zz}^{1/2}}^\dag - T{E_{zz}^{1/2}}\|^{2}.
 \end{eqnarray}
Therefore, the problem in (\ref{589mb})--(\ref{xnm93}) is reduced to
\begin{eqnarray}\label{}
\min_{T:  \rank T \leq k} \|E_{xz}{E_{zz}^{1/2}}^\dag - T{E_{zz}^{1/2}}\|^{2}.
 \end{eqnarray}
 Its solution is given in \cite{tor5277} by
   \begin{eqnarray}\label{xm02}
T=[T_0 \hspace*{1mm} T_1] = [E_{xz}{E_{zz}^\dag}^{1/2}]_k{E_{zz}^\dag}^{1/2}(I + N).
  \end{eqnarray}
Let us write $U_Q\Sigma_Q V_Q^T = Q$  for the SVD of $Q=E_{xz}{E_{zz}^\dag}^{1/2}$.
Then by \cite{tor843},
 \begin{eqnarray}\label{amn82}
 [E_{xz}{E_{zz}^\dag}^{1/2}]_k = U_{Q, k} U_{Q, k}^T E_{xz}{E_{zz}^\dag}^{1/2}.
 \end{eqnarray}
 Since ${G_z}=Q Q^T$ then $U_{G_z}=U_Q$ and $U_{{G_z}, k} = U_{Q, k}$. Therefore, (\ref{xm02}) and (\ref{amn82}) imply
\begin{eqnarray}\label{9cn02}
T=[T_0 \hspace*{1mm} T_1] = U_{{G_z}, k}U_{{G_z}, k}^T G_{xz}(I + N).
  \end{eqnarray}
 Here, on the basis of (\ref{akl7}),
\begin{eqnarray}\label{am198n}
 G_{xz} = [E_{xy}\hspace*{1.5mm}  E_{xs}]\left[ \begin{array}{cc}
                         E_{yy}^\dag  & \oo\\
                         \oo  & E_{ss}^\dag
                           \end{array} \right] =  [G_{xy}\hspace*{1.5mm} G_{xs}]
\end{eqnarray}
and
 \begin{eqnarray}\label{348vn}
 &&\hspace*{-30mm} {G_z}=[E_{xy}\hspace*{1.5mm}  E_{xs}]\left[ \begin{array}{cc}
                         E_{yy}^\dag  & \oo\\
                         \oo  & E_{ss}^\dag
                           \end{array} \right] \left[ \begin{array}{c}E_{yx}\\  E_{sx}\end{array} \right] \nonumber\\
 &&\hspace*{-20mm} = E_{xy}E_{yy}^\dag E_{yx} +  E_{xs}E_{ss}^\dag E_{sx}= G_{y}  + G_{s}.
\end{eqnarray}
 Then (\ref{z,.23}), (\ref{294b}) and (\ref{2mox5}) follow.
$\hfill\blacksquare$   \end{proof}

Thus, the three-term PCA is represented by (\ref{sjk92}), (\ref{smi91}), (\ref{z,.23}) and (\ref{294b}).

\section{Analysis of the error associated with three-term PCA}

Let us denote the error associated with the three-term PCA by
\begin{eqnarray}\label{s89mb}
\varepsilon_{m,n,\ell} (T_0, T_1) = \min_{\substack{[T_0 \hspace*{1mm} T_1]: \\ \rank [T_0 \hspace*{1mm} T_1] \leq k}}\|\x - [ T_0\y +  T_1 \f(\y, \w)]\|^2_\Omega.
 \end{eqnarray}
The following theorem establishes  a priori determination of $\varepsilon_{m,n,\ell} (T_0, T_1)$.


\begin{theorem}\label{377nm}
Let $\f$, $T_0$ and $T_1$  be determined by Theorem \ref{389nm}. Then
 \begin{eqnarray}\label{215smb}
\varepsilon_{m,n,\ell} (T_0, T_1)  = \|E_{xx}^{1/2}\|^{2} - \sum_{i=1}^{k} \sigma_i ({G_z}).
 \end{eqnarray}
 \end{theorem}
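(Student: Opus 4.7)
The plan is to start from the identity already established in the proof of Theorem \ref{389nm}, namely
$$
\|\x - T\zz\|^2_\Omega = \|E_{xx}^{1/2}\|^{2} - \|E_{xz}{E_{zz}^{1/2}}^\dag\|^{2} + \|E_{xz}{E_{zz}^{1/2}}^\dag - T{E_{zz}^{1/2}}\|^{2},
$$
which reduces the computation of $\varepsilon_{m,n,\ell}(T_0,T_1)$ to evaluating the minimum of the last term subject to $\rank T \le k$.

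First, I would denote $Q = E_{xz}{E_{zz}^{1/2}}^\dag$. By the optimal choice of $T=[T_0\hspace*{1mm}T_1]$ in (\ref{z,.23}), one has $T E_{zz}^{1/2} = U_{G_z,k}U_{G_z,k}^T Q = [Q]_k$, the truncated SVD of $Q$ of rank $k$ (here I use the identity $U_{G_z} = U_Q$ shown in the proof of Theorem \ref{389nm}, and the formula (\ref{amn82}) for the best rank-$k$ approximation). Hence, by the Eckart--Young theorem,
$$
\min_{\rank T \le k} \|Q - TE_{zz}^{1/2}\|^2 = \|Q - [Q]_k\|^2 = \sum_{i=k+1}^{\rank Q}\sigma_i^2(Q),
$$
and combining this with the identity above gives
$$
\varepsilon_{m,n,\ell}(T_0,T_1) = \|E_{xx}^{1/2}\|^{2} - \|Q\|^2 + \sum_{i=k+1}^{\rank Q}\sigma_i^2(Q) = \|E_{xx}^{1/2}\|^{2} - \sum_{i=1}^{k}\sigma_i^2(Q).
$$

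The final step is to identify $\sigma_i^2(Q)$ with $\sigma_i(G_z)$. Since $E_{zz}$ is positive semidefinite, $E_{zz}^{1/2}$ is symmetric and $(E_{zz}^{1/2})^\dag (E_{zz}^{1/2})^\dag = E_{zz}^\dag$, so
$$
QQ^T = E_{xz}E_{zz}^{1/2\dag}E_{zz}^{1/2\dag}E_{zx} = E_{xz}E_{zz}^\dag E_{zx} = G_z.
$$
Therefore $\sigma_i^2(Q) = \sigma_i(QQ^T) = \sigma_i(G_z)$, and substituting into the displayed expression yields (\ref{215smb}).

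I do not expect significant obstacles: the identity from the proof of Theorem \ref{389nm} does most of the work, and the remaining manipulations are the standard Eckart--Young minimum plus the algebraic identification $QQ^T = G_z$. The only point requiring mild care is the use of $E_{zz}^{1/2\dag}$ rather than an ordinary inverse, which is handled by the pseudoinverse identity for symmetric positive semidefinite square roots.
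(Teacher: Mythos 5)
Your proof is correct and follows essentially the same route as the paper's: both start from the quadratic identity (\ref{wnmui}), reduce the rank-constrained minimum to $\|Q-[Q]_k\|^2$ with $Q=E_{xz}{E_{zz}^{1/2}}^\dag$ (the paper invokes Lemma 42 of its reference to absorb the factor ${E_{zz}^\dag}^{1/2}(I+N)E_{zz}^{1/2}$, which is exactly your step $TE_{zz}^{1/2}=[Q]_k$), and then convert $\sum_{i\le k}\sigma_i^2(Q)$ into $\sum_{i\le k}\sigma_i(G_z)$ via $QQ^T=G_z$. The only cosmetic difference is your explicit appeal to Eckart--Young, which the paper leaves implicit in its citation of the solution formula.
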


\begin{proof}
It follows from (\ref{wnmui}) and (\ref{xm02})  that
 \begin{eqnarray}\label{q28mb}
&&\hspace*{-10mm}\varepsilon_{m,n,\ell} (T_0, T_1) \nonumber\\
&=& \|E_{xx}^{1/2}\|^{2} -  \|E_{xz}{E_{zz}^{1/2}}^\dag\|^{2} + \|E_{xz}{E_{zz}^{1/2}}^\dag - [E_{xz}{E_{zz}^\dag}^{1/2}]_k{E_{zz}^\dag}^{1/2}(I + N){E_{zz}^{1/2}}\|^{2}\nonumber\\
&=& \|E_{xx}^{1/2}\|^{2} -  \|E_{xz}{E_{zz}^{1/2}}^\dag\|^{2} + \|E_{xz}{E_{zz}^{1/2}}^\dag - [E_{xz}{E_{zz}^\dag}^{1/2}]_k
\|^{2}.
  \end{eqnarray}
The latter is true because by Lemma 42 in \cite[p. 311]{torbook2007},
$$
[E_{xz}{E_{zz}^\dag}^{1/2}]_k = [E_{xz}{E_{zz}^\dag}^{1/2}]_k{E_{zz}^\dag}^{1/2}{E_{zz}^{1/2}}.
$$
Then
\begin{eqnarray}\label{qr56b}
\varepsilon_{m,n,\ell} (T_0, T_1)& = & \|E_{xx}^{1/2}\|^{2} - \sum_{i=1}^{m} \sigma_i({G_z}) + \sum_{i=k+1}^{m} \sigma_i({G_z})\nonumber\\
                                 & = & \|E_{xx}^{1/2}\|^{2} - \sum_{i=1}^{k} \sigma_i({G_z}).
  \end{eqnarray}
  Thus, (\ref{215smb}) is true.\hfill$\blacksquare$
\end{proof}

\section{Advantages of three-term PCA}\label{x78an}

Here and in Section \ref{x78kl} below, we justify in detail the  advantages of the three-term PCA that have been highlighted in Section \ref{wn9w}.

\subsection{Solution of Problem 2. Improvement in the associated error compared to PCA and GBT1}\label{298an}

We wish to show that the error associated with the three-term PCA, ${\varepsilon_{m,n,\ell} (T_0, T_1)}$, is  less  than that of PCA and the GBT1 \cite{tor843}. A similar statement has been provided in Corollary 3 in \cite{924074} under the condition which is difficult to verify. In Theorems \ref{xnm91n} and \ref{mak92n} below, we show that the condition can be omitted. Let us denote the error associated with the GBT1 by
\begin{eqnarray}\label{3290b}
\varepsilon_{m,n} (B_0) = \min_{\substack{B_0\in{\mathbb R}^{m\times n}: \\ \rank (B_0) \leq k}}\|\x - B_0\y\|^2_\Omega.
  \end{eqnarray}
Matrix $B_0=R_1P_1$ that solves the RHS in (\ref{3290b}) follows from (\ref{ac11}) if  $R_1 P_2\vv=\mathbf 0$ (see Section \ref{gbt1}).


\begin{theorem}\label{xnm91n}
For any non-zero random vectors $\x, \y$ and $\w$,
\begin{eqnarray}\label{al9b}
\varepsilon_{m,n,\ell} (T_0, T_1) \leq \varepsilon_{m,n} (B_0).
  \end{eqnarray}
  If $G_s=E_{xs}E_{ss}^\dag E_{sx}$ is positive definite then
  \begin{eqnarray}\label{al9xc}
\varepsilon_{m,n,\ell} (T_0, T_1) < \varepsilon_{m,n} (B_0).
  \end{eqnarray}
\end{theorem}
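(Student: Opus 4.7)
The plan is to reduce the inequality to a comparison of sums of singular values of two positive semi-definite matrices, and then apply Weyl's inequality (Proposition \ref{proposition8}).

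First, I would record the closed-form expression
$$\varepsilon_{m,n,\ell}(T_0,T_1) = \|E_{xx}^{1/2}\|^{2} - \sum_{i=1}^{k}\sigma_i(G_z)$$
from Theorem \ref{377nm}. Next I need the analogous formula for the GBT1. Since GBT1 is the special case of the three-term transform obtained by dropping the $T_1 \f(\y,\w)$ term (equivalently, by taking $\s=\mathbf 0$), repeating the reduction used in Theorems \ref{389nm}--\ref{377nm} with $\zz$ replaced by $\y$ yields
$$\varepsilon_{m,n}(B_0) = \|E_{xx}^{1/2}\|^{2} - \sum_{i=1}^{k}\sigma_i(G_y),$$
where $G_y = E_{xy}E_{yy}^\dagger E_{yx}$. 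Subtracting, the inequality (\ref{al9b}) becomes equivalent to
$$\sum_{i=1}^{k}\sigma_i(G_y) \le \sum_{i=1}^{k}\sigma_i(G_z).$$

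The key structural fact is identity (\ref{2mox5}), $G_z = G_y + G_s$, together with the observation that both $G_y$ and $G_s$ are symmetric positive semi-definite (each has the form $E_{x u}E_{uu}^\dagger E_{u x}$, which is a Gram-type matrix). I would then apply Weyl's inequality to $A = G_y$ and $B = G_s$: the lower bound in (\ref{7hb91}) gives, for each $i = 1,\ldots,m$,
$$\sigma_i(G_y) + \sigma_m(G_s) \le \sigma_i(G_y + G_s) = \sigma_i(G_z).$$
Since $G_s \succeq \oo$ implies $\sigma_m(G_s) \ge 0$, this yields $\sigma_i(G_y) \le \sigma_i(G_z)$ term by term; summing the first $k$ of these produces the desired inequality and hence (\ref{al9b}).

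For the strict inequality (\ref{al9xc}), the assumption that $G_s$ is positive definite upgrades $\sigma_m(G_s) \ge 0$ to $\sigma_m(G_s) > 0$, so Weyl's inequality gives $\sigma_i(G_y) < \sigma_i(G_z)$ for every $i$, and summation preserves the strict inequality. The main subtlety I anticipate is the justification of the GBT1 error formula in the form above: it has to be derived either by invoking the existing derivation from \cite{tor843} or by rerunning the trace manipulation in (\ref{wnmui})--(\ref{qr56b}) with $\zz$ replaced by $\y$ and with the Schur complement reduction of Proposition \ref{proposition5} bypassed (since only the $y$-block is involved). Once this formula is in place, the remainder of the argument is a clean, one-line application of Weyl's inequality.
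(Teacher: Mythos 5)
Your proposal is correct and follows essentially the same route as the paper: both use the error formula $\varepsilon_{m,n,\ell}(T_0,T_1)=\|E_{xx}^{1/2}\|^{2}-\sum_{i=1}^{k}\sigma_i(G_z)$ together with the GBT1 formula from \cite{tor843}, the decomposition $G_z=G_y+G_s$ with $G_s$ positive semi-definite (the paper writes $G_s=MM^T$ with $M=E_{xs}{E_{ss}^\dag}^{1/2}$ to make this explicit), and the lower bound in Weyl's inequality to get $\sigma_i(G_y)\le\sigma_i(G_z)$ termwise, with strictness when $G_s$ is positive definite. No substantive difference.
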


\begin{proof}
It is known  {\em \cite{tor843}} that
\begin{eqnarray}\label{xm819}
\varepsilon_{m,n} (B_0) = \|E_{xx}^{1/2}\|^{2} - \sum_{i=1}^{k} \sigma_i ({G_y}).
  \end{eqnarray}
Consider ${G_z} = G_y + ({G_z} - {G_y}).$ Clearly, ${G_z} - {G_y}$ is a symmetric matrix. Then on the basis of (\ref{7hb91}) in the above Proposition \ref{proposition8},
\begin{eqnarray}\label{20mna}
\sigma_i(G_y)+\sigma_{m}({G_z} - {G_y})\leq \sigma_i(G_z),
\end{eqnarray}
where
$$
{G_z} - {G_y} = G_s=MM^T
$$
and $M =  E_{xs}{E_{ss}^\dag}^{1/2}$. Thus, by Theorem 7.3 in \cite{Zhang2011}, ${G_z} - {G_y}$
 is a positive semi-definite matrix and then all its eigenvalues are nonnegative \cite[p. 167]{golub2013matrix}, i.e., $\sigma_i({G_z} - {G_y})\geq 0.$
Therefore, (\ref{20mna}) implies
$
\sigma_i ({G_y})\leq \sigma_i ({G_z}),
$
for all $i=1,\ldots,n$, and then
\begin{eqnarray}\label{akl81}
\sum_{i=1}^k\sigma_i ({G_y})\leq \sum_{i=1}^k\sigma_i ({G_z}).
\end{eqnarray}
As a result, (\ref{al9b}) follows from (\ref{215smb}), (\ref{xm819}) and (\ref{akl81}). In particular, if $G_s$ is positive definite then  $\sigma_i({G_z} - {G_y}) > 0$, for $i=1,\ldots,n$ and therefore, (\ref{al9xc}) is true.\hfill$\blacksquare$
\end{proof}

In the following Theorem \ref{mak92n}, we refine the result obtained in the above Theorem \ref{xnm91n}. 

\begin{theorem}\label{mak92n}
Let,  as before, $k= \min \{m, n\}.$ There exists  $\gamma\in [\sigma_m(G_s), \sigma_{1}(G_s)]$ such that
\begin{eqnarray}\label{al917}
\varepsilon_{m,n,\ell} (T_0, T_1) = \varepsilon_{m,n} (B_0) - k\gamma,
  \end{eqnarray}
  i.e., the error associated with the three-term PCA is less than that of the GBT1  by $k\gamma$.
\end{theorem}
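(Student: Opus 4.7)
The plan is to exploit the closed-form singular-value expressions already established for the two errors. Theorem \ref{377nm} gives
$$\varepsilon_{m,n,\ell}(T_0,T_1) = \|E_{xx}^{1/2}\|^2 - \sum_{i=1}^{k}\sigma_i(G_z),$$
while (\ref{xm819}) gives
$$\varepsilon_{m,n}(B_0) = \|E_{xx}^{1/2}\|^2 - \sum_{i=1}^{k}\sigma_i(G_y).$$
Subtracting, the $\|E_{xx}^{1/2}\|^2$ terms cancel and the difference reduces to $\sum_{i=1}^{k}[\sigma_i(G_z) - \sigma_i(G_y)]$. So the statement (\ref{al917}) will follow once I show this sum equals $k\gamma$ for some $\gamma \in [\sigma_m(G_s),\sigma_1(G_s)]$.

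Next I would use the decomposition $G_z = G_y + G_s$ from (\ref{2mox5})/(\ref{348vn}). Both $G_y$ and $G_s$ are $m\times m$ and symmetric; indeed, $G_s = E_{xs}E_{ss}^{\dagger}E_{sx}$ has the form $MM^T$ with $M = E_{xs}{E_{ss}^{\dagger}}^{1/2}$, as was already noted in the proof of Theorem \ref{xnm91n}. Therefore Weyl's inequality (Proposition \ref{proposition8}) applies to $G_z = G_y + G_s$, yielding, for every $i = 1,\ldots,m$,
$$\sigma_m(G_s) \;\leq\; \sigma_i(G_z) - \sigma_i(G_y) \;\leq\; \sigma_1(G_s).$$

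Summing from $i=1$ to $k$ gives
$$k\,\sigma_m(G_s) \;\leq\; \sum_{i=1}^{k}\bigl[\sigma_i(G_z) - \sigma_i(G_y)\bigr] \;\leq\; k\,\sigma_1(G_s).$$
Defining
$$\gamma \;:=\; \frac{1}{k}\sum_{i=1}^{k}\bigl[\sigma_i(G_z) - \sigma_i(G_y)\bigr],$$
we obtain $\gamma \in [\sigma_m(G_s),\sigma_1(G_s)]$ and $\sum_{i=1}^{k}[\sigma_i(G_z)-\sigma_i(G_y)] = k\gamma$, which combined with the difference computed in the first paragraph yields exactly (\ref{al917}).

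There is no real obstacle: the proof is essentially a one-line consequence of Weyl's inequality once the two error formulas are placed side by side. The only point requiring a moment's attention is to confirm the hypotheses of Proposition \ref{proposition8} are met (symmetry of $G_s$ and that $k \leq m$), both of which are immediate from the setup: $G_s$ is symmetric positive semi-definite as a Gram-type matrix, and $k = \min\{m,n\} \leq m$.
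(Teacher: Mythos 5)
Your proposal is correct and is essentially identical to the paper's own proof: both subtract the two closed-form error expressions, apply Weyl's inequality to the decomposition $G_z = G_y + G_s$ to bound each difference $\sigma_i(G_z)-\sigma_i(G_y)$ between $\sigma_m(G_s)$ and $\sigma_1(G_s)$, sum over $i=1,\dots,k$, and take $\gamma$ to be the resulting average. No gaps; the paper merely writes $\gamma=\bigl(\varepsilon_{m,n}(B_0)-\varepsilon_{m,n,\ell}(T_0,T_1)\bigr)/k$, which is the same quantity as your $\frac{1}{k}\sum_{i=1}^{k}\bigl[\sigma_i(G_z)-\sigma_i(G_y)\bigr]$.
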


\begin{proof}
The Weyl's inequality in (\ref{7hb91}) and the equality in (\ref{348vn})  imply, for $i= 1,\ldots, m$,
\begin{eqnarray*}\label{7dmnf1}
\sigma_i(G_y)+\sigma_{m}(G_s)\leq \sigma_i(G_z)\leq \sigma_i(G_y)+\sigma_{1}(G_s)
\end{eqnarray*}
which, in turn, implies
\begin{eqnarray*}
\sigma_{m}(G_s)\leq \sigma_i(G_z) - \sigma_i(G_y)\leq \sigma_{1}(G_s)
\end{eqnarray*}
and
\begin{eqnarray*}
\sum_{i=1}^k\sigma_{m}(G_s)\leq \sum_{i=1}^k[\sigma_i(G_z) - \sigma_i(G_y)]\leq \sum_{i=1}^k\sigma_{1}(G_s).
\end{eqnarray*}
Therefore,
\begin{eqnarray*}
k\sigma_{m}(G_s)\leq \sum_{i=1}^k\sigma_i(G_z) - \sum_{i=1}^k\sigma_i(G_y)\leq k\sigma_{1}(G_s)
\end{eqnarray*}
and
\begin{eqnarray*}
k\sigma_{m}(G_s)\leq \left(\|E_{xx}^{1/2}\|^{2} - \sum_{i=1}^k\sigma_i(G_y)\right) - \left(\|E_{xx}^{1/2}\|^{2} - \sum_{i=1}^k\sigma_i(G_z)\right)\leq k\sigma_{1}(G_s).
\end{eqnarray*}
Thus
\begin{eqnarray*}
k\sigma_{m}(G_s)\leq \varepsilon_{m,n} (B_0) - \varepsilon_{m,n,\ell} (T_0, T_1) \leq k\sigma_{1}(G_s)
\end{eqnarray*}
and
\begin{eqnarray*}
 \frac{\varepsilon_{m,n} (B_0) - \varepsilon_{m,n,\ell} (T_0, T_1)}{k} \in [\sigma_{m}(G_s), \sigma_{1}(G_s)],
\end{eqnarray*}
and then (\ref{al917}) follows.
$\hfill\blacksquare$   \end{proof}

\begin{remark}
If  $G_s$ is a full rank matrix then $\sigma_{m}(G_s)\neq 0$ and therefore, $\gamma\neq 0$, i.e. in this case,  (\ref{al917}) implies that $\varepsilon_{m,n,\ell} (T_0, T_1)$ is always less than $\varepsilon_{m,n} (B_0)$.
If $\rank (G_s) = r_s$ where $r_s <m$ then  $\sigma_{m}(G_s) = 0$ and $\gamma\in [0, \sigma_{1}(G_s)]$, i.e. in this case, $\gamma$ might be equal to $0$.
\end{remark}

\begin{remark}
Recall that PCA is a particular case of the GBT1 (see Section \ref{gbt1}). Therefore, in Theorems \ref{xnm91n} and \ref{mak92n}, $\varepsilon_{m,n} (B_0)$ can be treated as the error associated with PCA, under the restriction that matrix $E_{yy}$ is non-singular.
\end{remark}

\subsection{ Decrease in the error associated with the three-term PCA with
 the  increase in the injection dimension}\label{xvb8an}

 In Theorem \ref{w791n} that follows we show that the error $\varepsilon_{m,n,\ell} (T_0, T_1)$ associated with the three-term PCA (represented by  (\ref{sjk92}), (\ref{smi91})--(\ref{294b})) can be decreased  if vector $\s$ is extended to a new vector $\widetilde\s$ of a  dimension which is larger than that of vector $\s$. The vector  $\widetilde\s$ is constructed as $\widetilde\s = [\s^T  \gf^T]^T$ where $\gf = \h - G_{h z}\zz\in  L^2(\Omega,\mathbb{R}^\eta)$, $G_{h z}=E_{h z}E_{z z}^\dag$ and $\h\in  L^2(\Omega,\mathbb{R}^\eta)$ is arbitrary. As we mentioned before, similar to $\w$-injection, vector $\h$ is called the $\h$-injection. As before, $\s$ is defined by (\ref{smi91}) and  $\zz=[\y^T \s^T]^T$.
Thus, $\widetilde\s \in  L^2(\Omega,\mathbb{R}^{(\ell+\eta)})$ while  $\s \in  L^2(\Omega,\mathbb{R}^\ell)$, i.e. the dimension of $\widetilde\s$ is larger than that of $\s$ by $\eta$ entries.
In terms of  $\widetilde\s$, the three-term PCA is represented as
\begin{eqnarray}\label{sjk49}
\sss (\y, \w,\h) = S_0\y +  S_1\widetilde\s,
 \end{eqnarray}
 where similar to $T_0$ and $T_1$ in (\ref{294b}), and for $\widetilde{\zz}=[\y^T \hspace*{1mm} \widetilde{\s}^T]^T$,   matrices $S_0$ and $S_1$ are given by
 \begin{eqnarray}\label{wn202}
S_0 = U_{G_{\widetilde{z}},k}U_{G_{\widetilde{z}},k}^T G_{xy}\qa S_1 = U_{G_{\widetilde{z}},k}U_{G_{\widetilde{z}},k}^T  G_{x {\widetilde{s}}}.
 \end{eqnarray}
Here, $G_{\widetilde{z}} = G_y + G_{\widetilde{s}}$ and
$G_{\widetilde{s}} = E_{x {\widetilde{s}}}E_{{\tilde{s}}{\tilde{s}}}^{\dag}E_{{\widetilde{s}} x}$.
The associated error is denoted by
  \begin{eqnarray}\label{cn920}
\varepsilon_{m,n,\ell+\eta} (S_0, S_1) = \min_{\substack{[S_0 \hspace*{1mm} S_1]\in{\mathbb R}^{m\times (n+\ell+\eta)}: \\ \rank [S_0 \hspace*{1mm} S_1] \leq k}}\|\x - [ S_0\y +  S_1\widetilde\s]\|^2_\Omega.
\end{eqnarray}

\begin{theorem}\label{w791n}
For any non-zero random vectors $\x, \y,\w$ and $\h$,
\begin{eqnarray}\label{cnm201}
\varepsilon_{m,n,\ell+\eta} (S_0, S_1) \leq \varepsilon_{m,n,\ell} (T_0, T_1).
\end{eqnarray}
 If $G_s=E_{xs}E_{ss}^\dag E_{sx}$ is positive definite then
 \begin{eqnarray}\label{q87201}
\varepsilon_{m,n,\ell+\eta} (S_0, S_1) < \varepsilon_{m,n,\ell} (T_0, T_1).
\end{eqnarray}
\end{theorem}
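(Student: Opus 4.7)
The plan is to mirror the proof of Theorem \ref{xnm91n} one level deeper: reduce everything, via Theorem \ref{377nm}, to a comparison of the partial singular-value sums of $G_z$ and $G_{\widetilde z}$, and then show that $G_{\widetilde z} - G_z$ is positive semi-definite so that Weyl's inequality (Proposition \ref{proposition8}) yields the desired monotonicity.

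First, I would observe that $S_0$ and $S_1$ in (\ref{wn202}) are obtained from Theorem \ref{389nm} applied to the pair $(\y,\widetilde{\s})$ in place of $(\y,\s)$. For this to be legitimate, the analogue of the block-diagonal condition (\ref{akl7}) must hold for $\widetilde{\zz}=[\y^T\;\widetilde{\s}^T]^T$. Since $\widetilde{\s} = [\s^T\;\gf^T]^T$ with $\gf = \h - G_{hz}\zz$, I would compute
\begin{eqnarray*}
E_{z g} = E[\zz\h^T] - E_{zz}E_{zz}^\dag E_{zh} = \oo,
\end{eqnarray*}
using Corollary 1 of \cite{924074} exactly as in the proof of Theorem \ref{389nm}. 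Combined with $E_{ys}=\oo$, this gives
\begin{eqnarray*}
E_{\widetilde z\widetilde z} = \mbox{diag}(E_{yy},\,E_{ss},\,E_{gg}),
\end{eqnarray*}
so Theorem \ref{377nm} applies and yields $\varepsilon_{m,n,\ell+\eta}(S_0,S_1) = \|E_{xx}^{1/2}\|^2 - \sum_{i=1}^k \sigma_i(G_{\widetilde z})$.

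Next, using the block-diagonal structure of $E_{\widetilde z\widetilde z}$ exactly as in the derivation of (\ref{348vn}), I would obtain
\begin{eqnarray*}
G_{\widetilde z} = G_y + G_s + G_g = G_z + G_g,
\end{eqnarray*}
where $G_g = E_{xg}E_{gg}^\dag E_{gx}$. Writing $G_g = NN^T$ with $N = E_{xg}{E_{gg}^\dag}^{1/2}$ shows, by Theorem 7.3 in \cite{Zhang2011}, that $G_g$ is positive semi-definite, so all its singular values are nonnegative. Weyl's inequality (\ref{7hb91}) then gives $\sigma_i(G_{\widetilde z}) \geq \sigma_i(G_z) + \sigma_m(G_g) \geq \sigma_i(G_z)$ for every $i=1,\dots,m$, so
\begin{eqnarray*}
\sum_{i=1}^k \sigma_i(G_{\widetilde z}) \geq \sum_{i=1}^k \sigma_i(G_z).
\end{eqnarray*}
Combined with (\ref{215smb}) this yields (\ref{cnm201}). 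For the strict inequality, the natural hypothesis (paralleling Theorem \ref{xnm91n}) is that $G_g$ be positive definite, in which case $\sigma_m(G_g)>0$ and every term in the Weyl bound is strict; under the stated hypothesis that $G_s$ is positive definite, I would note that the same argument yields strictness whenever $G_g\neq \oo$ (otherwise $\gf$ is already captured by $\zz$ and injecting $\h$ adds no information).

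The only delicate step is the block-diagonality of $E_{\widetilde z\widetilde z}$, which in turn rests on the identity $E_{zh}=E_{zz}E_{zz}^\dag E_{zh}$. Everything else is a direct transcription of the Theorem \ref{xnm91n} argument with $(G_y,G_s,G_z)$ replaced by $(G_z,G_g,G_{\widetilde z})$.
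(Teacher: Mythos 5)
Your argument for the non-strict inequality (\ref{cnm201}) is correct, but it takes a more explicit route than the paper. The paper never touches the spectrum of $G_{\widetilde z}$: it simply splits $S_1=[S_{11}\;S_{12}]$, sets $\widehat S=[S_0\;S_{11}]$, rewrites the augmented problem as $\min\|\x-[\widehat S\zz+S_{12}\gf]\|^2_\Omega$, and then invokes Theorem \ref{xnm91n} verbatim with $\zz$ in the role of the observation and $\gf=\h-G_{hz}\zz$ in the role of $\s$ (note $\gf$ has exactly the form (\ref{smi91}) with $\y\to\zz$, $\w\to\h$, so the reduction is legitimate). You instead unroll that reduction: you verify the three-block diagonality of $E_{\widetilde z\widetilde z}$ via $E_{zg}=E_{zh}-E_{zz}E_{zz}^\dag E_{zh}=\oo$, deduce $G_{\widetilde z}=G_z+G_g$, and rerun the Weyl/Ky Fan argument. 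Both are sound; the paper's version is shorter because it reuses Theorem \ref{xnm91n} as a black box, while yours buys the explicit error formula $\varepsilon_{m,n,\ell+\eta}(S_0,S_1)=\|E_{xx}^{1/2}\|^2-\sum_{i=1}^k\sigma_i(G_{\widetilde z})$ and the quantitative gap $\sum_{i=1}^k[\sigma_i(G_{\widetilde z})-\sigma_i(G_z)]$, which is what one would need for an analogue of Theorem \ref{mak92n} at this level. The one step you should state explicitly rather than cite is the applicability of Theorem \ref{377nm} to $\widetilde\zz$: as literally stated that theorem is for $\s$ of the form (\ref{smi91}), so you must note that its proof (which only uses (\ref{wnmui}) and (\ref{xm02})) carries over to any $\zz$-type vector.

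On the strict inequality you have correctly identified the real issue: running Theorem \ref{xnm91n} at the level of $(\zz,\gf)$ makes the strictness condition ``$G_g=E_{xg}E_{gg}^\dag E_{gx}$ positive definite,'' not ``$G_s$ positive definite'' as the theorem states --- indeed if $E_{xg}=\oo$ then $G_{\widetilde z}=G_z$ and the two errors coincide no matter what $G_s$ is, so the hypothesis in (\ref{q87201}) does not by itself force strictness (the paper's own one-line deduction of (\ref{q87201}) from (\ref{al9xc}) has the same mismatch). However, your final remark that the argument ``yields strictness whenever $G_g\neq\oo$'' is not justified: the Weyl bound only gives $\sigma_i(G_{\widetilde z})\geq\sigma_i(G_z)+\sigma_m(G_g)$, and if $G_g$ is nonzero but singular then $\sigma_m(G_g)=0$ and no strictness follows from this estimate. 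You should either keep the hypothesis that $G_g$ is positive definite or supply a finer argument for the singular case.
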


\begin{proof}
Let us represent $S_1$ in terms of two blocks, $S_{11}$ and $S_{12}$, i.e., $S_1 = [S_{11}\hspace*{1mm} S_{12}],$ and also  write ${\widehat S} = [S_{0}\hspace*{1mm} S_{11}]$. Then
\begin{eqnarray}\label{190cn3}
& &\hspace*{-30mm}  \min_{\substack{[S_0 \hspace*{1mm} S_1]\in{\mathbb R}^{m\times (n+\ell+\eta)}: \\ \rank [S_0 \hspace*{1mm} S_1] \leq k}}\left\|\x - [ S_0\y +  S_1\widetilde\s]\right\|^2_\Omega\nonumber\\
& = & \min_{\substack{[S_0 \hspace*{1mm} S_1]\in{\mathbb R}^{m\times (n+\ell+\eta)}: \\ \rank [S_0 \hspace*{1mm} S_1] \leq k}}\left\|\x - \left[ S_0\y +  S_1\left[
                 \begin{array}{c}
                   \s \\
                   \gf \\
                 \end{array}
               \right]\right]\right\|^2_\Omega \nonumber\\
            & = & \min_{\substack{[S_0 \hspace*{1mm} S_1]\in{\mathbb R}^{m\times (n+\ell+\eta)}: \\ \rank [S_0 \hspace*{1mm} S_1] \leq k}}\left\|\x - [ S_0\y +  S_{11}\s + S_{12}\gf ]\right\|^2_\Omega  \nonumber \\
               & = &\min_{\substack{[S_0 \hspace*{1mm} S_1]\in{\mathbb R}^{m\times (n+\ell+\eta)}: \\ \rank [S_0 \hspace*{1mm} S_1] \leq k}} \left\|\x - [{\widehat S}\zz + S_{12}\gf ]\right\|^2_\Omega.
\end{eqnarray}
Here,
 $[S_0 \hspace*{1mm} S_1] = [S_0 \hspace*{1mm} S_{11}\hspace*{1mm} S_{12}] = [{\widehat S}\hspace*{1mm} S_{12}]$. Therefore,
\begin{eqnarray}\label{3333m}
& &\hspace*{-37mm} \min_{\substack{[S_0 \hspace*{1mm} S_1]\in{\mathbb R}^{m\times (n+\ell+\eta)}: \\ \rank [S_0 \hspace*{1mm} S_1] \leq k}} \left\|\x - [{\widehat S}\zz + S_{12}\gf ]\right\|^2_\Omega\nonumber\\
& = & \min_{\substack{[{\widehat S} \hspace*{1mm} S_{12}]\in{\mathbb R}^{m\times (n+\ell+\eta)}: \\ \rank [{\widehat S} \hspace*{1mm} S_{12}] \leq k}}\|\x - [{\widehat S}\zz + S_{12}\gf ]\|^2_\Omega.
\end{eqnarray}
In (\ref{s89mb}), let us write $\varepsilon_{m,n,\ell} (T_0, T_1)$  in terms of $\s$,
  \begin{eqnarray}\label{qg59mb}
\varepsilon_{m,n,\ell} (T_0, T_1) = \min_{\substack{[T_0 \hspace*{1mm} T_1]\in{\mathbb R}^{m\times (n+\ell)}: \\ \rank [T_0 \hspace*{1mm} T_1] \leq k}}\|\x - [ T_0\y +  T_1 \s]\|^2_\Omega.
 \end{eqnarray}
Then by (\ref{al9b}) in Theorem \ref{xnm91n},
\begin{eqnarray}\label{po02m}
& & \hspace*{-35mm}\min_{\substack{[{\widehat S} \hspace*{1mm} S_{12}]\in{\mathbb R}^{m\times (n+\ell+\eta)}: \\ \rank [{\widehat S} \hspace*{1mm} S_{12}] \leq k}}\|\x - [{\widehat S}\zz + S_{12}\gf ]\|^2_\Omega \nonumber\\
&\leq &\min_{\substack{T\in{\mathbb R}^{m\times (n+\ell)}: \\ \rank (T) \leq k}}\|\x - T\zz\|^2_\Omega\nonumber\\
& = &\min_{\substack{[T_1 \hspace*{1mm} T_2]\in{\mathbb R}^{m\times (n+\ell)}: \\ \rank [T_1 \hspace*{1mm} T_2] \leq k}}\left\|\x - [T_0 \hspace*{1mm} T_1]\left[
                 \begin{array}{c}
                   \y \\
                   \s \\
                 \end{array}
               \right]\right\|^2_\Omega\nonumber\\
& = &\min_{\substack{[T_0 \hspace*{1mm} T_1]\in{\mathbb R}^{m\times (n+\ell)}: \\ \rank [T_0 \hspace*{1mm} T_1]\leq k}} \|\x - [T_0\y +T_1\s]\|^2_\Omega,
\end{eqnarray}
where $T = [T_0 \hspace*{1mm} T_1]$, $T_0\in{\mathbb R}^{m\times n}$ and $T_1\in{\mathbb R}^{m\times \ell}$.
Then (\ref{cnm201})  follows from (\ref{3333m}) and  (\ref{po02m}),  and  (\ref{al9xc}) implies (\ref{q87201}).
$\hfill\blacksquare$   \end{proof}

\begin{remark}\label{hjk29}
An intuitive explanation of the statement of Theorem \ref{w791n} is that the increase in the  dimension of vector $\widetilde\s$ implies the increase in the dimension of matrix $S_1$ in (\ref{sjk49}) so that $S_1\in\rt^{m\times (n+\ell+\eta)}$ while in (\ref{sjk92}),  (\ref{smi91})--(\ref{294b}), $T_1\in\rt^{m\times (n+\ell)}$. Therefore, the optimal matrix $S_1$ has $m\times \eta$ entries more  than $T_1$ to further  minimize the associated error. As a result, the three-term PCA in form  (\ref{sjk49}), for the same number of principal components $k$, provides the more accurate reconstruction of $\x$ than the three-term PCA in form (\ref{sjk92}).
\end{remark}

\subsection{Decrease in the error associated with the three-term PCA  compared with that of the GBT2 and GKLT}\label{1n8an}

In Theorem \ref{wd87n} below, we show that  the three-term PCA in (\ref{sjk49})-(\ref{wn202}) provides the associated accuracy which is better than that of the GBT2 in (\ref{bnm11})-(\ref{ac11}).
Let us denote the error associated with the GBT2  by
\begin{eqnarray}\label{ii9m}
\varepsilon_{m,n} (B_0, B_1) = \min_{\substack{[B_0 \hspace*{1mm} B_1]\in{\mathbb R}^{m\times (2n)}: \\ \rank [B_0 \hspace*{1mm} B_1]\leq k}} \|\x - [B_0\y +B_1\vv]\|^2_\Omega,
\end{eqnarray}
where $B_0=R_1  P_1$ and $B_1=R_1  P_2$ are determined by (\ref{ac11}).
In fact, the result below is a version of Theorem \ref{w791n} as follows.

\begin{theorem}\label{wd87n}
Let $\s = \vv$ where $\ell = n$, and random vector $\vv$ is the same as in (\ref{bnm11})-(\ref{ac11}).
Then for any non-zero random vectors $\x, \y$ and $\h$,
\begin{eqnarray}\label{cnm101}
\varepsilon_{m,n,n+\eta} (S_0, S_1) \leq \varepsilon_{m,n} (B_0, B_1).
  \end{eqnarray}
 If $G_s=E_{xs}E_{ss}^\dag E_{sx}$ is positive definite then
 \begin{eqnarray}\label{n29h01}
\varepsilon_{m,n,n+\eta} (S_0, S_1) < \varepsilon_{m,n} (B_0, B_1).
  \end{eqnarray}
\end{theorem}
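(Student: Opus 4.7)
The plan is to observe that Theorem \ref{wd87n} follows almost immediately from Theorem \ref{w791n} once we recognize that, under the hypothesis $\s = \vv$ and $\ell = n$, the basic three-term PCA problem and the GBT2 problem are the same minimization.

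First, I would substitute $\s = \vv$ and $\ell = n$ into the definition (\ref{qg59mb}) of $\varepsilon_{m,n,\ell}(T_0, T_1)$. The resulting problem
$$\min_{[T_0 \hspace*{1mm} T_1]:\,\rank[T_0 \hspace*{1mm} T_1] \leq k} \|\x - [T_0\y + T_1\vv]\|^2_\Omega$$
is literally the rank-constrained least-squares problem (\ref{ii9m}) that defines $\varepsilon_{m,n}(B_0, B_1)$: the feasible set is the same in both cases (pairs of $m \times n$ matrices with joint rank at most $k$), and the objective is identical. Hence $\varepsilon_{m,n,n}(T_0, T_1) = \varepsilon_{m,n}(B_0, B_1)$.

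Second, I would invoke Theorem \ref{w791n} with $\ell = n$ to obtain
$$\varepsilon_{m,n,n+\eta}(S_0, S_1) \leq \varepsilon_{m,n,n}(T_0, T_1),$$
together with strict inequality when $G_s = E_{xs}E_{ss}^\dagger E_{sx}$ is positive definite. Chaining this with the equality from the first step immediately yields (\ref{cnm101}) and, under the same positive definiteness assumption, the strict version (\ref{n29h01}).

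I do not expect any significant obstacle; the theorem is essentially a corollary of Theorem \ref{w791n} once the identification between the three-term PCA problem with $\s = \vv$ and the GBT2 problem is made. The only thing to verify carefully is that the rank constraint $\rank[T_0 \hspace*{1mm} T_1] \leq k$ in (\ref{qg59mb}) coincides with the rank constraint $\rank[B_0 \hspace*{1mm} B_1] \leq k$ in (\ref{ii9m}), which is immediate from the definitions and needs no factorization argument through $R_1 P_1$, $R_1 P_2$.
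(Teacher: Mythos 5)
Your proposal is correct and follows essentially the same route as the paper: the paper likewise observes that Theorem \ref{w791n} holds for any choice of $\s$, specializes to $\s=\vv$ with $\ell=n$, and concludes (\ref{cnm101}) and (\ref{n29h01}) from (\ref{cnm201}) and (\ref{q87201}). Your explicit identification of $\varepsilon_{m,n,n}(T_0,T_1)$ with $\varepsilon_{m,n}(B_0,B_1)$ as the same rank-constrained minimization is the step the paper leaves implicit, and making it explicit is a small improvement rather than a departure.
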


\begin{proof}

We observe that  Theorem \ref{w791n} is true for any form of $\s$. In particular, it is true for $\s = \vv$ where $\ell = n$ and random vector $\vv$ is the same as in (\ref{bnm11})-(\ref{ac11}). Then (\ref{cnm101}) and (\ref{n29h01}) follow from (\ref{cnm201}) and (\ref{q87201}), respectively. $\hfill\blacksquare$
\end{proof}
  \begin{remark}\label{op28m}
Theorem \ref{wd87n} is also valid for $\s  = \y^2$ with $\ell =n$. Thus, in this case, the three-term PCA in (\ref{sjk49})-(\ref{wn202}) provides the associated accuracy which is better than that of the GKLT in (\ref{qmiw8}).
  \end{remark}

\section{Special cases of  three-term PCA}\label{speccases}


In both forms of the three-term PCA represented by (\ref{sjk92}) and (\ref{sjk49}), vectors $\s$ and $\widetilde \s$ are constructed from auxiliary random vectors called  $\w$-injection and $\h$-injection, respectively, which are assumed to be arbitrary. At the same time, a natural desire is to understand if there are approaches for choosing $\w$ and $\h$ which may (or may not) improve the three-term PCA performance. Here, such approaches are considered.

\subsection{Case 1.  Choice of $\w$ on the basis of  Weierstrass theorem}\label{}

  For the three-term PCA represented by (\ref{sjk92}), a seemingly reasonable choice of vector $\w$ is $\w = \y^2$ where $\y^2$ is defined by the Hadamard product, $\y^2 = \y\circ \y$, i.e. by $\y^2(\omega) = [\y^2_1(\omega), \ldots, \y^2_n(\omega)]^T$, for all $\omega\in \Omega$. This is because if  $\ttt(\y, \w)$ in (\ref{sjk92}) is written as $\ttt(\y, \y^2) = T_0\y + T_1 \f(\y,\y^2)$, then $\ttt(\y, \y^2)$ can be interpreted as a polynomial of the second degree. If  $T_0$ and  $T_1$  are defined by Theorem \ref{389nm}  then $\ttt(\y, \y^2)$  can be considered as an approximation to  an `idealistic' transform $\p$ such that $\x=\p(\y)$. On the basis of the Stone-Weierstrass theorem \cite{Kreyszig1978,Timofte2005}, $\ttt(\y, \y^2)$ should seemingly provide a better associated approximation accuracy of  $\p(\y)$  than that of the first degree polynomial $\ttt(\y) = T_0\y$. Nevertheless, the constraint  of the reduced ranks  implies a deterioration  of
$\p(\y)$ approximation by $\ttt(\y, \y^2) = T_0\y + T_1\y^2$. That constraint is not a condition of the Stone-Weierstrass theorem. To the best of our knowledge, an extension  of the Stone-Weierstrass theorem to a best  {\em rank constrained} estimation of $\x$ is still not justified.
Further, if for example, $\y = \x +\bxi$ where $\bxi$ is a random noise, then $\y^2 = \x^2 +2\x\circ\bxi + \bxi^2$, i.e., $\y^2$ becomes even more corrupted than $\y$. Another inconvenience is that a knowledge or evaluation of matrices $E_{x y^2}$ and  $E_{y^2 y^2}$ is difficult. For the above reasons, the choice of $\w$ in the form $\w = \y^2$ is not preferable.

\subsection{Case 2.  Choice of $\w$ as an optimal estimate of $\x$}\label{}

Another  seemingly reasonable choice of  $\w$-injection in (\ref{sjk92}) is $\w=A\y$ where $A=E_{xy}E_{yy}^\dag$, i.e. $\w=E_{xy}E_{yy}^\dag \y$ is the optimal minimal-norm linear estimation of $\x$ \cite{torbook2007}. Nevertheless, in this case, $\s = A(\y - E_{yy}E_{yy}^\dag \y)$ and then
$$
 E_{xs}  = (E_{xy} -E_{xy}E_{yy}^\dag E_{yy})A^T = \oo
$$
since $E_{xy} =E_{xy}E_{yy}^\dag E_{yy}$ \cite[p. 168]{torbook2007}. The latter   implies $E_{xs} E_{ss}^\dag =  \oo$. As a result, by (\ref{294b}), $T_1=\oo$  and then in (\ref{sjk92}), $ \ttt(\y,\w) = T_0\y$. In other words, this choice of $\w$ is unreasonable since then the three-term PCA in (\ref{sjk92}), (\ref{294b}) is reduced to the GBT1 in \cite{tor843}.

\subsection{ Case 3. Choice of $\h$: `worse is better'}\label{vnb01m}


It has been shown in Theorem \ref{w791n} that  the error associated with the three-term PCA represented by  (\ref{sjk49})  decreases if vector $\s\in  L^2(\Omega,\mathbb{R}^{\ell})$ is replaced with a new vector $\widetilde{\s}\in  L^2(\Omega,\mathbb{R}^{(\ell+\eta)})$ of a larger dimension. Recall, in (\ref{sjk49}), $\widetilde{\s}$ is formed  from an arbitrary $\h\in  L^2(\Omega,\mathbb{R}^{\eta})$.  In particular, for $\eta = 0$, (\ref{cnm201}) implies $\varepsilon_{m,n,\ell+\eta} (S_0, S_1) = \varepsilon_{m,n,\ell} (T_0, T_1)$. Thus, the increase in $\eta$ implies the decrease in the error associated with the three-term PCA represented by  (\ref{sjk49}).
 Thus, a reasonable (and quite surprising)  choice of $\h$ is as follows: $\h$ is random and $\eta$ is large.  This is similar to the  concept `worse is better' \cite{Gabriel}, i.e., a random $\h$  with large dimension $\eta$ (`worse') is a preferable option (`better') in terms of practicality and usability over, for example, the choice of $\w$ considered, for instance, in Case 1. In Case 1, the dimension  of $\w=\y^2$ is $n$ and cannot be changed while dimension $\ell$ can vary and, in particular, can be increased. Another advantage over Case 1 is that there is no need to evaluate matrices $E_{x y^2}$ and  $E_{y^2 y^2}$ as required in Case 1. 


In Example \ref{m2b9} that follows,    we  numerically illustrate Theorem \ref{w791n} and Case 3.

\begin{example}\label{m2b9}
Let ${\bf{x}}=[\x_1,\ldots, \x_m]^T\in L^2(\Omega,\mathbb{R}^{m})$ represent a temperature distribution in $m$ locations in Australia. Entries of $\x$ and corresponding locations are represented in Table \ref{table1}.
\begin{table}[h]
\centering
\begin{tabular}{|c|c|}
  \hline
  Entry & Location \\
  \hline
  $\x_1$ & Canberra \\
  \hline
  $\x_2$ & Tuggeranong \\
  \hline
  $\x_3$ & Sydney \\
  \hline
  $\x_4$ & Penrith \\
  \hline
  $\x_5$ & Wollongong \\
  \hline
  $\x_6$ & Melbourne \\
  \hline
  $\x_7$ & Ballarat \\
  \hline
  $\x_8$ & Albury-Wodonga \\
  \hline
  $\x_{9}$ & Bendigo \\
  \hline
  $\x_{10}$ & Brisbane \\
  \hline
  $\x_{11}$ & Cairns \\
  \hline
  $\x_{12}$ & Townsville \\
  \hline
  $\x_{13}$ & Gold Coast \\
  \hline
  $\x_{14}$ & Adelaide \\
  \hline
  $\x_{15}$ &  Mount Gambier \\
  \hline
  $\x_{16}$ & Renmark \\
  \hline
  $\x_{17}$ & Port Lincoln \\
  \hline
\end{tabular}
\begin{tabular}{|c|c|}
  \hline
  Entry & Location \\
  \hline
  $\x_{18}$ & Perth \\
  \hline
  $\x_{19}$ & Kalgoorlie-Boulder \\
  \hline
  $\x_{20}$ & Broome \\
  \hline
  $\x_{21}$ & Hobart \\
  \hline
  $\x_{22}$ & Launceston \\
  \hline
  $\x_{23}$ & Devonport \\
  \hline
  $\x_{24}$ & Darwin \\
  \hline
  $\x_{25}$ & Alice Springs \\
  \hline
  $\x_{26}$ & Tennant Creek \\
  \hline
  $\x_{27}$ & Casey \\
  \hline
  $\x_{28}$ & Davis \\
  \hline
  $\x_{29}$ & Mawson \\
  \hline
  $\x_{30}$ & Macquarie Island \\
  \hline
  $\x_{31}$ & Christmas Island \\
  \hline
  $\x_{32}$ & Cocos Island \\
  \hline
  $\x_{33}$ & Norfolk Island \\
  \hline
  $\x_{34}$ & Howe Island \\
  \hline
\end{tabular}
\caption{\large\small Distribution of entries of signal ${\bf \x}$ and locations in Australia.}
\label{table1}
\end{table}

Suppose $\omega\in\Omega$ is associated with time $t_\omega\in[0, 24]$ of the temperature measurement. Then $\x_j(\omega)$, for $j=1,\ldots,m$, is a temperature in the $j$th location at time $t_\omega$. The values of the minimum and maximum daily temperature, and the temperature at 9am and 3pm, in $m=34$  specific locations of Australia, for each day in 2016,  are provided\footnote{There were 366 days in 2016. There are four values of temperature per day. Thus, we have 1464  temperature values for 2016. From a statistical point of view, the temperature measurements should be provided in more times of a day but such data are not available for us.} by the Bureau of Meteorology of the Australian Government \cite{australia_temp}.
 In particular, the distribution of the  maximum daily temperature in 2016 in all 34 locations in Australia is diagrammatically represented in Fig. \ref{fig3}.
 Let $t_{min}$ and $t_{max}$ denote times when minimal and maximum temperature occur.  We denote by $\omega_{t_{min}}$, $\omega_{t_{max}}$, $\omega_{9{am}}$ and $\omega_{3{pm}}$ outcomes associated with times $t_{min}$, $t_{max}$, 9{am} and 3{pm}, respectively. Further, for $j=1,\ldots,m$, we denote by $\x_{j, (date)}(\omega)$ a temperature in the $j$-th location at time $t_\omega$ on the date labeled as `date'.
For example, on $07/07/2016$, the corresponding temperature values in Ballarat are
$\x_{7, (07/07/2016)}(\omega_{t_{min}}) = 7.4,$  $\x_{7, (07/07/2016)}(\omega_{9 {am}}) = 8.1$,
$\x_{7, (07/07/2016)}(\omega_{3{pm}}) = 9.2,$  $\x_{7, (07/07/2016)}(\omega_{t_{max}}) = 9.5.$

 Let ${\bf y}=A{\bf x}+\bxi$ where $A\in\mathbb{R}^{m\times m}$ is an arbitrary matrix with uniformly distributed random entries and  $\bxi\in L^2(\Omega,\mathbb{R}^{m})$  is white noise, i.e. $E_{\xi \xi} = \sigma^2 I$.
Further, let $\w\in L^2(\Omega,\mathbb{R}^{\ell})$ and $\h\in L^2(\Omega,\mathbb{R}^{\eta})$ be  Gaussian random vectors used in the three-term PCA given by (\ref{sjk49}), (\ref{wn202}).
 It is assumed that noise $\bxi$ is  uncorrelated with ${\bf x}$, $\w$ and ${\bf h}$.
Covariance matrices  are represented in terms of samples. For example, $E_{xw} = \frac{1}{p}X W^T$ and $E_{hh} = \frac{1}{p} H H^T$ where $X\in\mathbb{R}^{m\times p}$,  $W\in\mathbb{R}^{\ell\times p}$ and $H\in\mathbb{R}^{\eta\times p}$ are samples of ${\bf x}$,  ${\bf w}$ and $\h$, and  $p$ is the number of samples.  Other covariance matrices are represented similarly. In this example, we consider four  specific  samples of $\x$ as follows. Matrices $X=X_{9am}\in\mathbb{R}^{m\times 366}$ and  $X=X_{3pm}\in\mathbb{R}^{m\times 366}$ represent  the temperature taken each day in 2016 at 9am in all $m$ locations. Entries of matrices $X_{max}\in\mathbb{R}^{m\times 366}$ and $X_{min}\in\mathbb{R}^{m\times 366}$ are values of the maximum and minimum temperature, respectively, for each day in 2016 in all $m$ locations. The database was taken from the Bureau of Meteorology website of Australian Government \cite{australia_temp}.  Matrices $W$ and $H$ were created by MATLAB command {\tt rand(m,p)}.

\begin{figure}[h]
  \centering
  \includegraphics[scale=0.65]{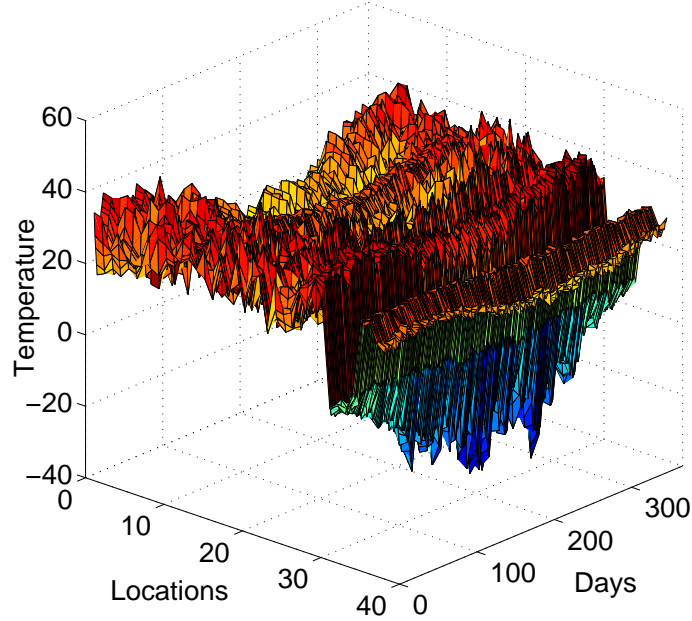}\\
  \caption{Values of maximal temperature in 34 locations of Australia in 2016}
  \label{fig3}
\end{figure}

We  consider eight different cases of simulations. In each case, matrix $X$ is chosen either as one of matrices $X_{9am}$, $X_{3pm}$, $X_{min}$, $X_{max}$, or their combinations as follows:
\begin{table}[h]
\centering
\begin{tabular}{|c|c|c|c|c|c|}
  \cline{2-6}
  \multicolumn{1}{c|}{} & {\small Case 1} & {\small Case 2} & {\small Case 3}  & {\small Case 4} & {\small Case 5} \\
  \hline
 $X$  & $X_{min}$ & $X_{9am}$ & $X_{3pm}$  & $X_{max}$ & $[X_{9am}, X_{3pm}]$ \\
  \hline
$p$  & $366$ & $366$ & $366$  & $366$ & $732$ \\
  \hline
\end{tabular}

\vspace{0.7cm}

\begin{tabular}{|c|c|c|c|}
  \cline{2-4}
  \multicolumn{1}{c|}{}& {\small Case 6}  & {\small Case 7} & {\small Case 8} \\
  \hline
$X$  & $[X_{min}, X_{max}]$ & $[X_{min}, X_{9am}, X_{3pm}]$ & $[X_{min}, X_{9am}, X_{3pm}, X_{max}]$ \\
  \hline
$p$ & $732$ & $1098$ & $1464$  \\
 \hline
\end{tabular}
\end{table}
For each case, the diagrams of the  error associated with the GBT1, GBT2 and three-terms PCA in form (\ref{sjk49})-(\ref{wn202}), for  $m=\ell = 34$, $k=17$ and $\sigma=1$, versus the dimension $\eta=0, 1, \ldots, 500$ of vector ${\bf h}$ are shown in Fig. \ref{fig4}.

\begin{figure}[h!]
    \centering
    \begin{tabular}{cc}
      \includegraphics[scale=0.44]{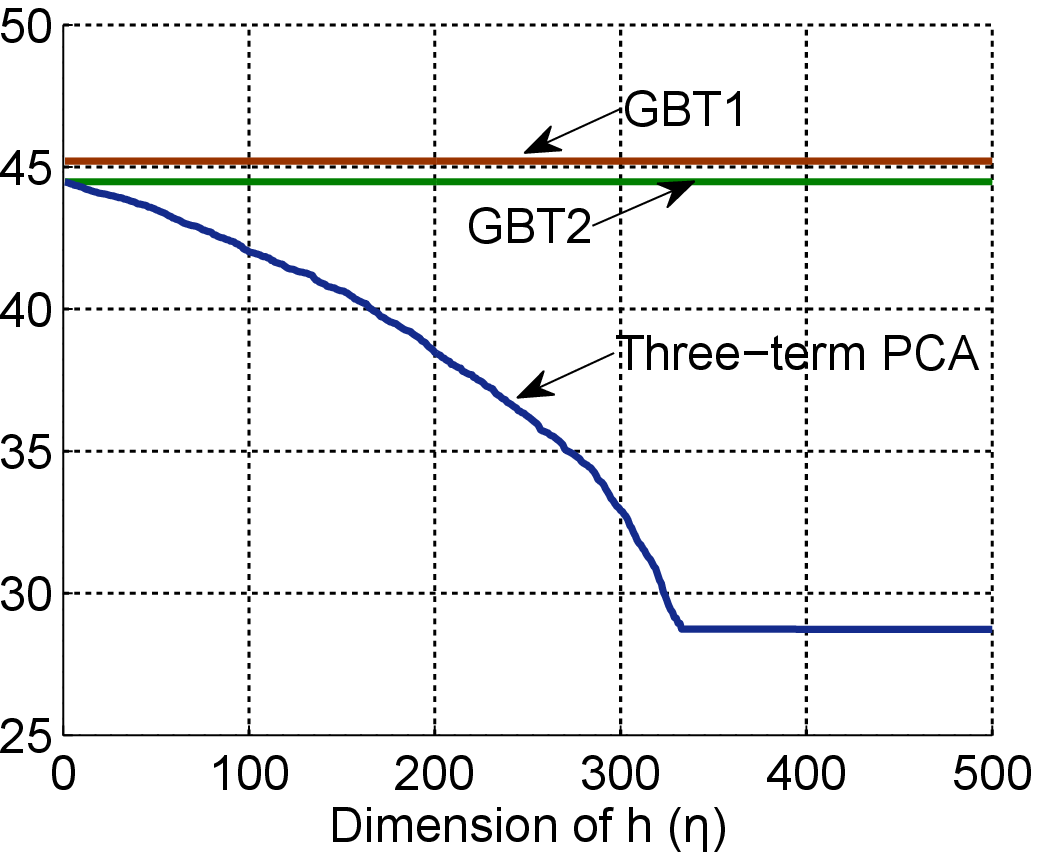}  & \includegraphics[scale=0.44]{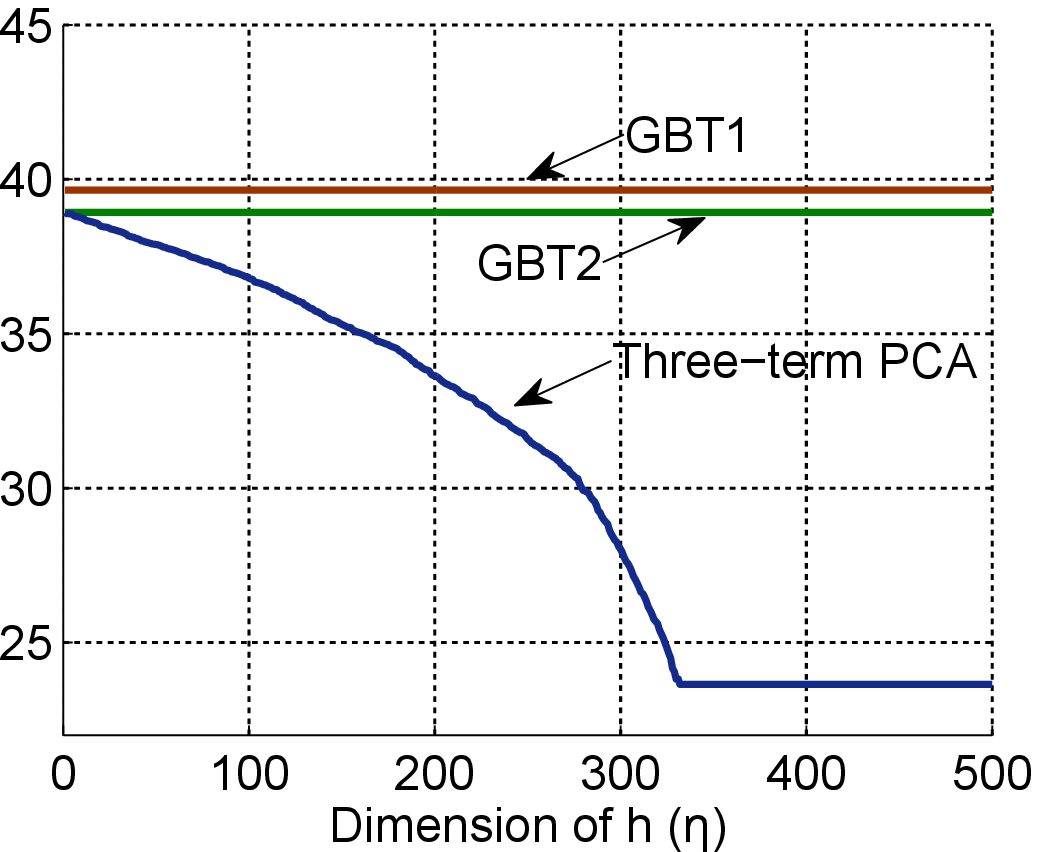}\\
      {\small (a) Case 1: $X=X_{min}$} & {\small (b) Case 2: $X=X_{9am}$} \\
      \includegraphics[scale=0.44]{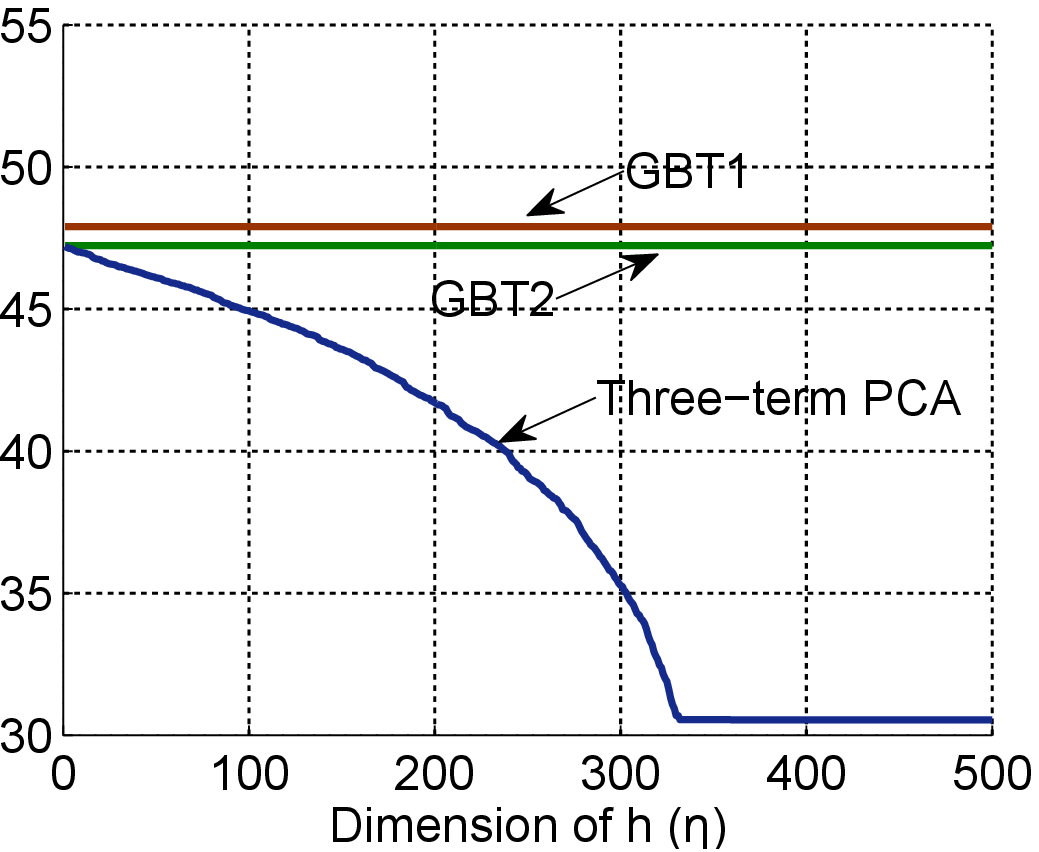}  & \includegraphics[scale=0.44]{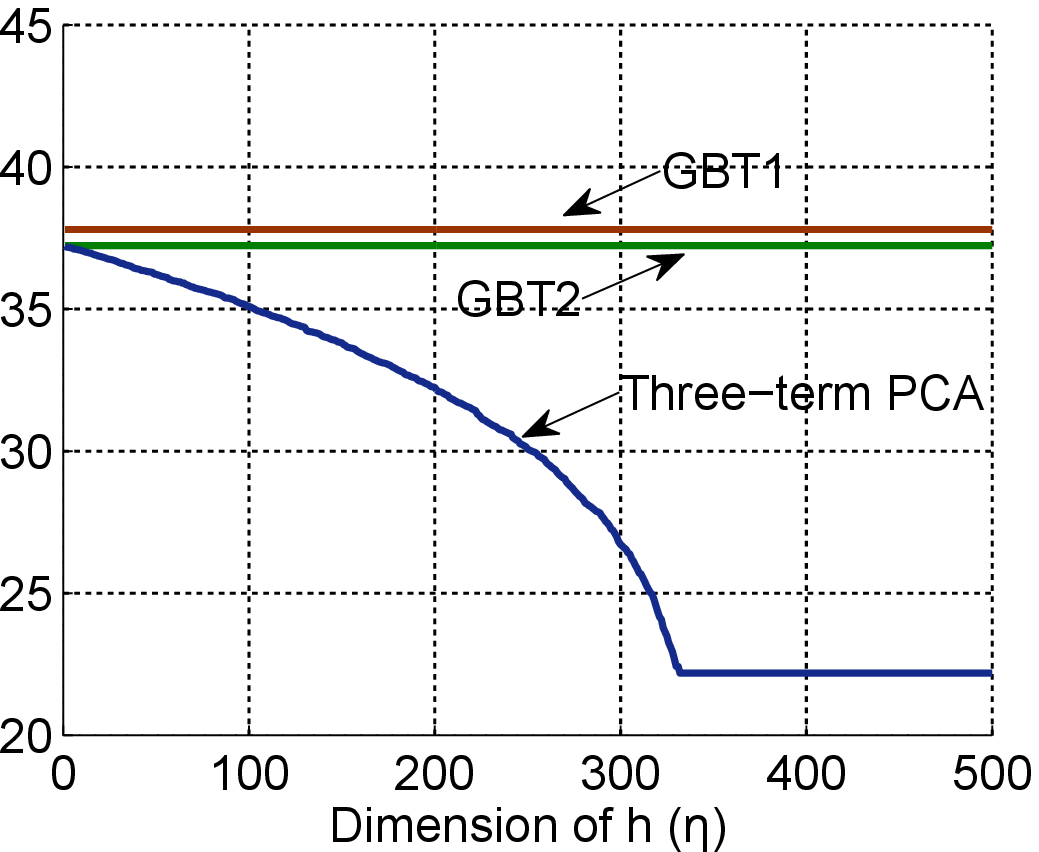}\\
      {\small (c) Case 3: $X=X_{3pm}$} & {\small (d) Case 4: $X=X_{max}$} \\
      \includegraphics[scale=0.44]{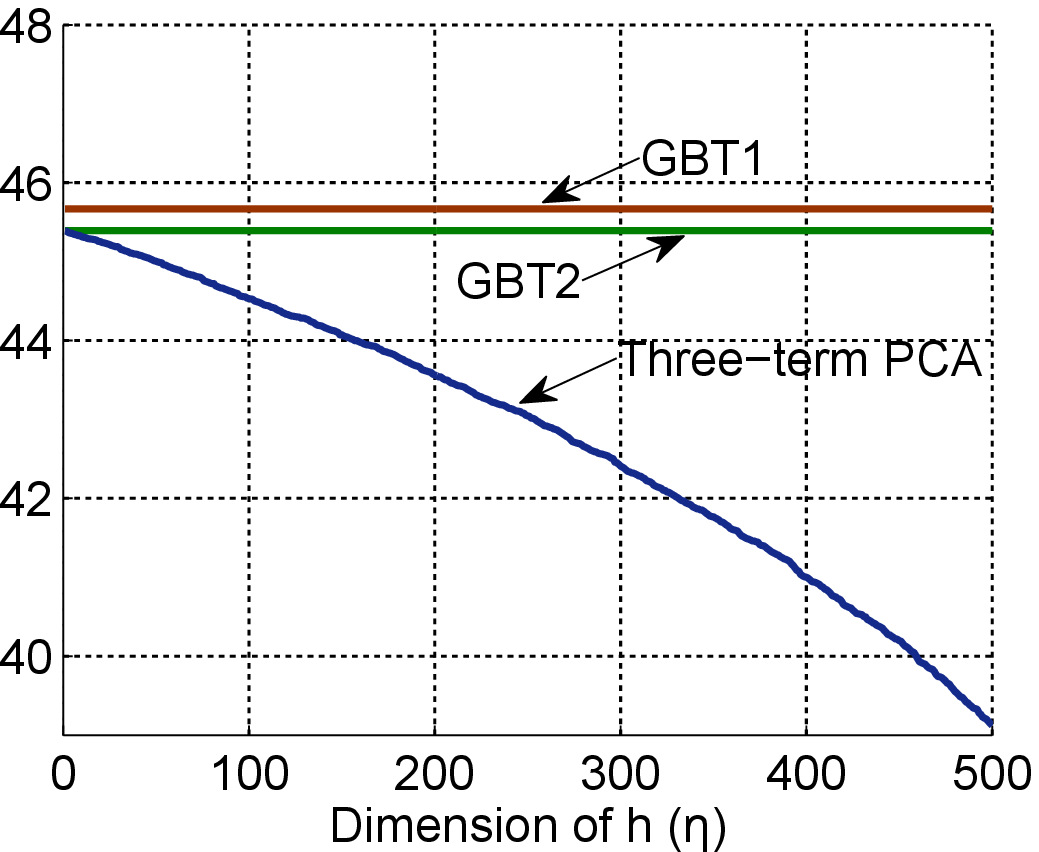} & \includegraphics[scale=0.44]{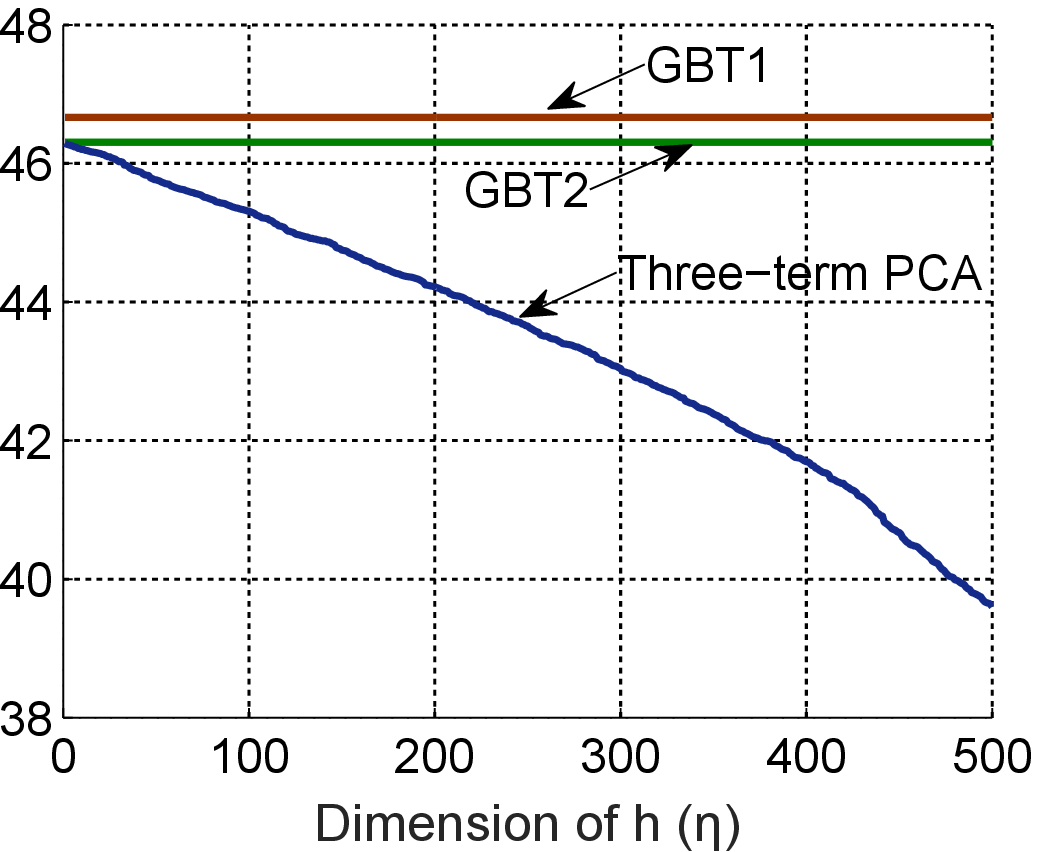}\\
      {\small (e) Case 5: $X=[X_{9am}\;X_{3pm}]$} & {\small (f) Case 6: $X=[X_{min}\;X_{max}]$} \\
      \includegraphics[scale=0.44]{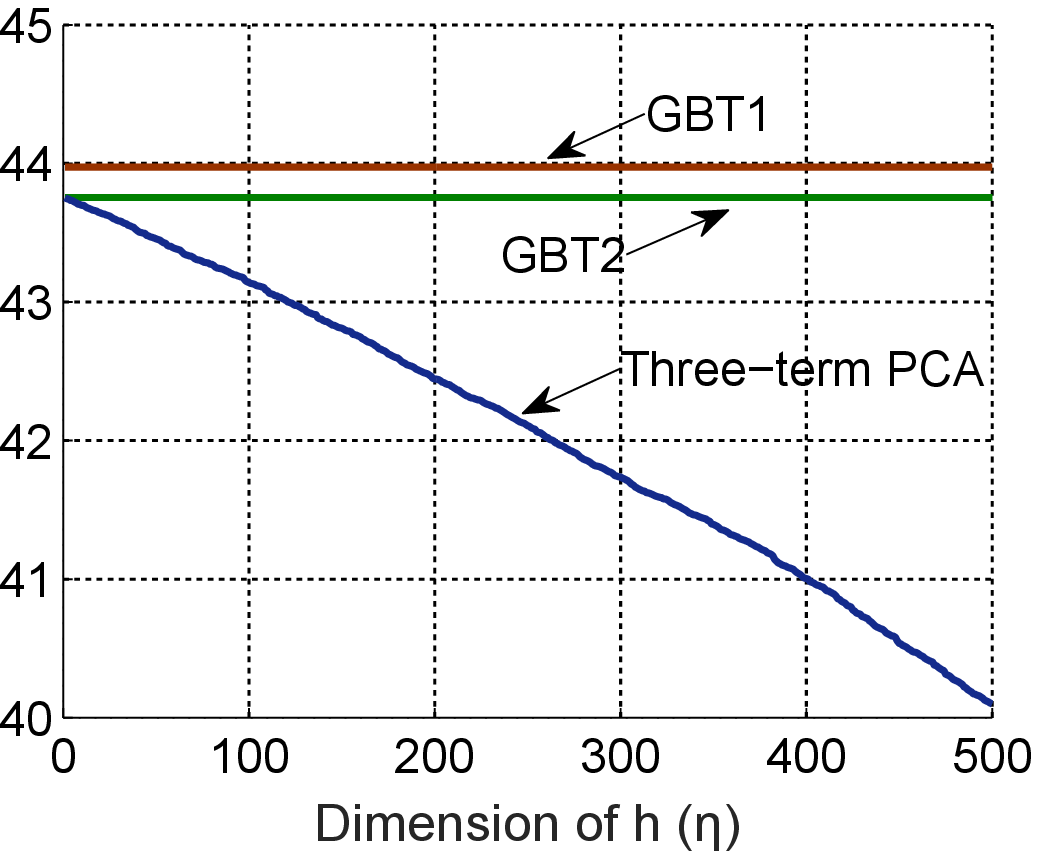}  & \includegraphics[scale=0.44]{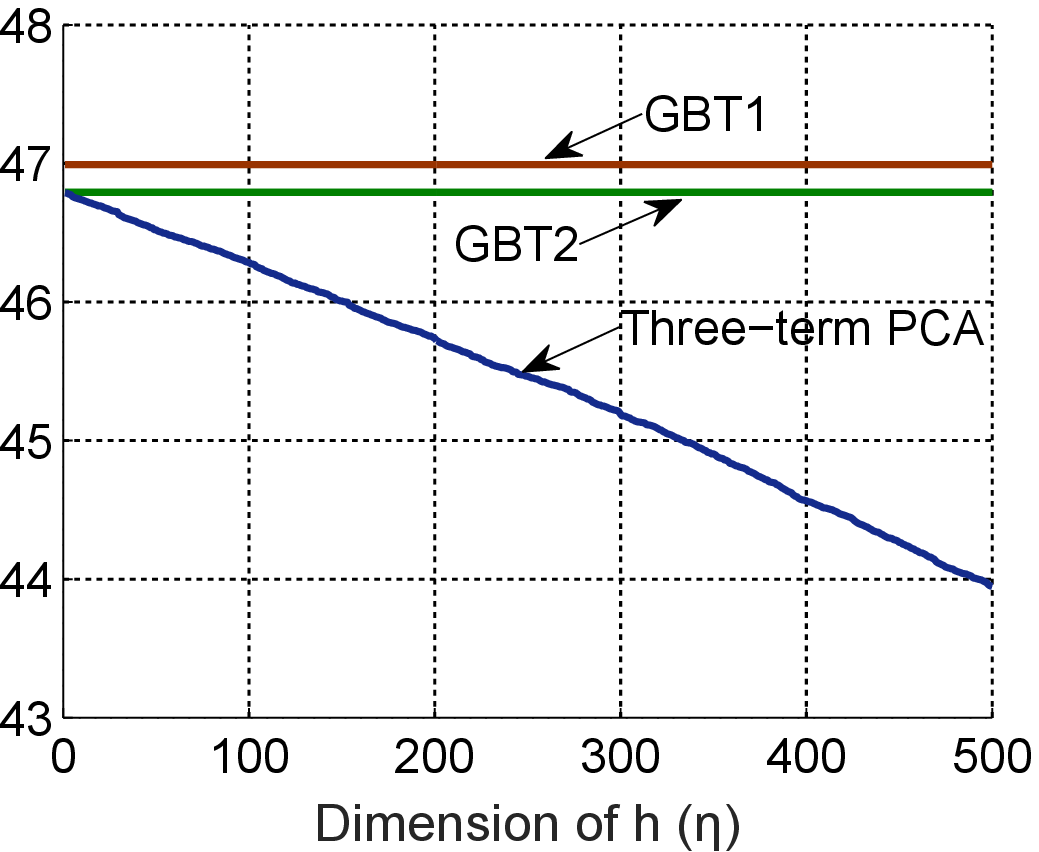}\\
      {\small (g) Case 7: $X=[X_{min}\;X_{9am}\;X_{3pm}]$} & {\small (h) Case 8: $X=[X_{min\;}X_{9AM}\;X_{3PM}\;X_{max}]$}\\
    \end{tabular}
    \caption{Errors associated with the GBT1, GBT2 and three-terms PCA versus dimension $\eta$ of $H$.}
    \label{fig4}
\end{figure}

It follows from the diagrams for all cases represented in  Fig. \ref{fig4}, that the error associated with the three-term PCA  decreases as $\eta$ increases. This is the numerical illustration of Theorem \ref{w791n} and the Case 3 considered in Section \ref{vnb01m}. In particular, in Figs. \ref{fig4} (a)-(d), for $\eta= 335,\ldots, 500$, the error associated with the three-term PCA coincides with that of the GBT1 applied to the case when $\y = \x$, i.e.. with that of PCA applied to the data {\sf\em  without any noise.}

Further, Fig. \ref{fig5} illustrates the following  observation. For  $\sigma\in [0, 2]$ and  $\eta\in [0, 300]$, the behavior of the error associated with the three-term PCA  is similar:  the increase in $\eta$ implies the decrease in the error. Interestingly, for $\eta\in [300, 500]$, the error remains  constantly small regardless of  the value of $\sigma$. Similar to the above, for $\eta\in [300, 500]$, the error associated with the three-term PCA coincides with that that of PCA applied to the data {\sf\em without any noise.}

\begin{figure}[h!]
  \centering
  \includegraphics[scale=0.6]{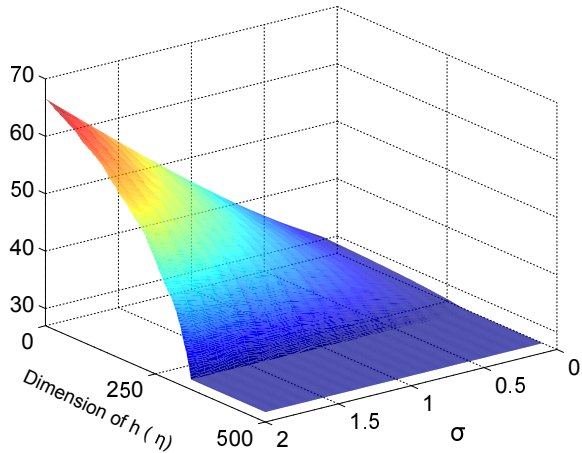}\\
  \caption{Diagrams of errors associated with the  three-terms PCA versus dimension $\eta$ of $\h$ and values of $\sigma$ of   $E_{\xi\xi}$.}
  \label{fig5}
\end{figure}
\end{example}

\subsection{ Case 4.   Pure filtering}\label{}

One more special case of the three-term PCA is as follows.
Consider transform  $\ttt_1$  defined by
\begin{eqnarray}\label{sj92}
\ttt_1(\y,\w) =  A_0\y + A_1 \f(\y, \w),
\end{eqnarray}
where $A_0\in\rt^{m\times n}$ and $A_1\in\rt^{m\times \ell}$ are  full rank matrices. Optimal $A_0$ and $A_1$ are determined from the solution of the following problem:
  Given  $E_{xy}$, $E_{yy}$, $E_{yw}$ are $E_{ww}$, find  full rank $A_0$ and $A_1$  that solve
\begin{eqnarray}\label{zbn91}
\min_{{A_0, A_1}} \|\x - [A_0\y + A_1 \f(\y, \w)]\|^2_\Omega.
\end{eqnarray}
In other words, $\ttt_1$ is a pure filter, with no principal component determination.
As before, we write $\s=\f(\y, \w)=\w- E_{wy} E_{yy}^\dag \y$. We call $\ttt_1$ the three-term filter (TTF).

\begin{theorem}\label{1nm0}
Minimal norm solution to problem (\ref{zbn91}) is given by
\begin{eqnarray}\label{68bn1}
 A_0=E_{x y} E_{yy}^\dag \qa A_1=E_{x s} E_{ss}^\dag.
\end{eqnarray}
The associated error is represented by
\begin{eqnarray}\label{xmabn1}
&&\hspace*{-10mm}\min_{A_0, A_1}  \|\x - [A_0 \y + A_1 \s]\|^2_\Omega\nonumber\\
&& \hspace*{-3mm}= \|E_{xx}^{1/2}\|^2 -  \|[E_{xy}{E_{yy}^{1/2}}^\dag\|^{2} -  \|E_{xs}{E_{ss}^{1/2}}^\dag ]\|^{2}.
\end{eqnarray}
\end{theorem}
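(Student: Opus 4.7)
The plan is to reduce the problem to a stacked unconstrained least-squares problem and then exploit block diagonality. Concatenate $\zz = [\y^T\ \s^T]^T$ and $A = [A_0\ A_1]$, so that (\ref{zbn91}) becomes
\begin{eqnarray*}
\min_A \|\x - A\zz\|^2_\Omega,
\end{eqnarray*}
but now with no rank constraint on $A$. The classical (unconstrained) Wiener-type solution yields the minimum-norm solution $A = E_{xz}E_{zz}^\dag$.

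The key is to exploit the property, already established in the proof of Theorem \ref{389nm}, that $E_{ys} = \oo$, so that $E_{zz}$ is block diagonal with diagonal blocks $E_{yy}$ and $E_{ss}$. By Proposition \ref{proposition10}, $\rank(E_{zz}) = \rank(E_{yy}) + \rank(E_{ss})$, so applying Proposition \ref{proposition5} with $B = \oo$ (equivalently, just invoking the fact that the Moore--Penrose pseudoinverse of a block-diagonal PSD matrix is the block diagonal of the blocks' pseudoinverses) gives
\begin{eqnarray*}
E_{zz}^\dag =\left[ \begin{array}{cc}
                         E_{yy}^\dag  & \oo\\
                         \oo  & E_{ss}^\dag
                           \end{array} \right].
\end{eqnarray*}
Combining this with $E_{xz} = [E_{xy}\ E_{xs}]$ splits the optimal $A$ into exactly the claimed $A_0 = E_{xy}E_{yy}^\dag$ and $A_1 = E_{xs}E_{ss}^\dag$ in (\ref{68bn1}).

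For the error formula, I would reuse the identity (\ref{wnmui}) derived in the proof of Theorem \ref{389nm}:
\begin{eqnarray*}
\|\x - A\zz\|^2_\Omega = \|E_{xx}^{1/2}\|^2 - \|E_{xz}{E_{zz}^{1/2}}^\dag\|^2 + \|E_{xz}{E_{zz}^{1/2}}^\dag - A E_{zz}^{1/2}\|^2.
\end{eqnarray*}
Because there is no rank constraint here, the last Frobenius term can be driven to zero by the solution $A = E_{xz}E_{zz}^\dag$ (this uses $E_{xz} = E_{xz}E_{zz}^\dag E_{zz}$, which holds by the standard range property of covariance matrices). Then, using that ${E_{zz}^{1/2}}^\dag$ inherits the same block-diagonal structure (the principal square root of a PSD block-diagonal matrix is block diagonal, and so is its pseudoinverse), the Frobenius norm splits as
\begin{eqnarray*}
\|E_{xz}{E_{zz}^{1/2}}^\dag\|^2 = \|E_{xy}{E_{yy}^{1/2}}^\dag\|^2 + \|E_{xs}{E_{ss}^{1/2}}^\dag\|^2,
\end{eqnarray*}
which yields (\ref{xmabn1}).

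The only mildly delicate step is the block-diagonal pseudoinverse identity for $E_{zz}$ in the possibly singular case, but this is immediate from Propositions \ref{proposition10} and \ref{proposition5} once $E_{ys}=\oo$ has been verified, and the latter was already shown in the proof of Theorem \ref{389nm}. Everything else is direct substitution. $\hfill\blacksquare$
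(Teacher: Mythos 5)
Your proposal is correct and follows essentially the same route as the paper: both obtain $A=[A_0\ A_1]=E_{xz}E_{zz}^\dag$ as the unconstrained (rank-free) specialization of the solution (\ref{xm02}), use the block-diagonality of $E_{zz}$ coming from $E_{ys}=\oo$ to split $A$ into $E_{xy}E_{yy}^\dag$ and $E_{xs}E_{ss}^\dag$, and then substitute into the decomposition (\ref{wnmui}) to kill the last term via the identity $E_{zz}^\dag E_{zz}^{1/2}={E_{zz}^{1/2}}^\dag$ and split the middle Frobenius norm blockwise. Your explicit justification of the block-diagonal pseudoinverse via Propositions \ref{proposition5} and \ref{proposition10} is a detail the paper leaves implicit, but it is not a different argument.
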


\begin{proof}
Optimal full rank  $A_0$ and $A_1$ given by (\ref{68bn1}) follow from (\ref{xm02}) where ${E_{zz}^\dag}^{1/2}$ is represented by ${E_{zz}^\dag}^{1/2} = \left[ \begin{array}{cc}
                         {E_{yy}^\dag}^{1/2}  & \oo\\
                         \oo  & {E_{ss}^\dag}^{1/2}
                           \end{array} \right]$ and $[E_{xz}{E_{zz}^\dag}^{1/2}]_{k}$   should be replaced with $E_{xz}{E_{zz}^\dag}^{1/2}$.
Further, for $A=[A_0\hspace*{1mm} A_1]$,
\begin{eqnarray}\label{}
 \|\x - [A_1 \y + A_1 \s]\|^2_\Omega = \|E_{xx}^{1/2}\|^2 -   \|E_{xz}{E_{zz}^{1/2}}^\dag\|^{2} + \|E_{xz}{E_{zz}^{1/2}}^\dag - A{E_{zz}^{1/2}}\|^{2}.
\end{eqnarray}
For $A_0$ and $A_1$ given by (\ref{68bn1}), $A=[E_{x y} E_{yy}^\dag \hspace*{1mm}E_{x s} E_{ss}^\dag]=E_{xz}E_{zz}^\dag$. Therefore,
\begin{eqnarray*}
\hspace*{-17mm}\min_{A_1, A_1}  \|\x - [A_1 \y + A_1 \s]\|^2_\Omega \hspace*{-5mm}& & =\|E_{xx}^{1/2}\|^2 -   \|E_{xz}{E_{zz}^{1/2}}^\dag\|^{2} \\
                                                     &  &\hspace*{10mm}+ \|E_{xz}{E_{zz}^{1/2}}^\dag - E_{xz}E_{zz}^\dag {E_{zz}^{1/2}}\|^{2}\\
                                                     && \hspace*{20mm} =\|E_{xx}^{1/2}\|^2 -   \|E_{xz}{E_{zz}^{1/2}}^\dag\|^{2}
\end{eqnarray*}
because  $E_{zz}^\dag {E_{zz}^{1/2}} = {E_{zz}^{1/2}}^\dag$ {\em \cite[p. 313]{torbook2007}}. Then (\ref{xmabn1}) follows.\hfill$\blacksquare$
\end{proof}

The accuracy of the optimal TTF $\ttt_1$ is better than that of the optimal linear filter $\widetilde{F}=A_0= E_{x y} E_{yy}^\dag$ \cite{torbook2007}. More specifically, the following is true.

\begin{theorem}\label{nm012}
The error associated with the optimal TTF $\ttt_1$ is less than that of the optimal linear filter $\widetilde{F}=A_1$ by $\|E_{xs}{E_{ss}^{1/2}}^\dag ]\|^{2}$, i.e.,
\begin{eqnarray}\label{wom9}
\min_{A_0, A_1} \|\x - [A_0 \y + A_1 \f(\y, \w)]\|^2_\Omega = \min_{A_0} \|\x - A_0 \y \|^2 -  \|E_{xs}{E_{ss}^{1/2}}^\dag ]\|^{2}_\Omega.
\end{eqnarray}
\end{theorem}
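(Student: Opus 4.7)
The plan is to read off both sides of the claimed identity directly from Theorem \ref{1nm0} (which we may invoke) together with the standard error formula for the classical optimal linear filter, and then subtract.

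First I would record, from equation (\ref{xmabn1}) in Theorem \ref{1nm0}, the optimal TTF error
\begin{eqnarray*}
\min_{A_0, A_1}  \|\x - [A_0 \y + A_1 \s]\|^2_\Omega
= \|E_{xx}^{1/2}\|^2 -  \|E_{xy}{E_{yy}^{1/2}}^\dag\|^{2} -  \|E_{xs}{E_{ss}^{1/2}}^\dag\|^{2},
\end{eqnarray*}
which is exactly the TTF case with $\s = \f(\y, \w)$.

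Next I would derive, as a one-term specialization, the error of the classical optimal linear filter $\widetilde F = E_{xy}E_{yy}^\dag$. The cleanest way is to repeat the completion-of-squares argument already used in (\ref{wnmui}) but with $\zz$ replaced by $\y$ and no rank constraint: writing
\begin{eqnarray*}
\|\x - A_0 \y\|^2_\Omega = \|E_{xx}^{1/2}\|^{2} -  \|E_{xy}{E_{yy}^{1/2}}^\dag\|^{2} + \|E_{xy}{E_{yy}^{1/2}}^\dag - A_0 E_{yy}^{1/2}\|^{2},
\end{eqnarray*}
the last summand vanishes at $A_0 = E_{xy}E_{yy}^\dag$ (using $E_{yy}^\dag E_{yy}^{1/2} = {E_{yy}^{1/2}}^{\dag}$, as in the end of the proof of Theorem \ref{1nm0}), giving
\begin{eqnarray*}
\min_{A_0} \|\x - A_0 \y\|^2_\Omega = \|E_{xx}^{1/2}\|^{2} - \|E_{xy}{E_{yy}^{1/2}}^\dag\|^{2}.
\end{eqnarray*}

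Finally I would subtract the two expressions. Both contain the common quantities $\|E_{xx}^{1/2}\|^{2}$ and $\|E_{xy}{E_{yy}^{1/2}}^\dag\|^{2}$, so they cancel and what remains is precisely $-\|E_{xs}{E_{ss}^{1/2}}^\dag\|^{2}$, which is the claim in (\ref{wom9}). There is no genuine obstacle here; the only mild point to be careful about is the pseudo-inverse identity $E_{yy}^\dag E_{yy}^{1/2}={E_{yy}^{1/2}}^{\dag}$ used to confirm that the minimizer of the single-term expression is attained at $A_0 = E_{xy}E_{yy}^\dag$, but this is the same identity already invoked in the proof of Theorem \ref{1nm0}.
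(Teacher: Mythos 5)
Your proposal is correct and matches the paper's argument: the paper's proof of this theorem is simply the remark that it ``follows directly from (\ref{xmabn1})'', and what you have written is exactly the intended expansion of that remark --- take the TTF error formula from Theorem \ref{1nm0}, take the standard one-term error $\|E_{xx}^{1/2}\|^{2}-\|E_{xy}{E_{yy}^{1/2}}^{\dag}\|^{2}$ for the optimal linear filter, and subtract. No gaps.
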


\begin{proof}
The proof follows directly from (\ref{xmabn1}).  \hfill$\blacksquare$
\end{proof}


\section{Advantages of three-term PCA (continued)}\label{x78kl}


\subsection{Increase in accuracy compared to that of GBT2 \cite{tor843} and GKLT}\label{58b29}

The three-term PCA  represented by (\ref{sjk92}) and (\ref{sjk49})  has more parameters to optimize than those in the GBT2 (\ref{bnm11}) and GKLT (\ref{qmiw8}), i.e., it has more degrees of freedom to control the performance. Indeed, in the three-term PCA, the dimension of $\w$-injection and $\h$-injection are $\ell$ and $\eta$, and they can be varied while in the GBT2, the dimension of auxiliary vector $\vv$ is $n$ and it is fixed. In the GKLT dimension of vector $\y^2$ is also fixed.
 Thus, in the three-term PCA, $\ell$ and $\eta$ represent the additional degrees of freedom.

 By Theorem \ref{w791n}, the increase in $\eta$ implies the improvement in the accuracy of the three-term PCA. Thus, unlike the GBT2, the performance of the three-term PCA  is improved because of  the increase in the dimension of $\h$-injection.

\subsection{Improvement in the associated numerical load}\label{cm9vn}

In a number of applied problems, dimensions $m,n$ of associated covariance matrices are large. For instance, in the DNA array analysis \cite{Alter11828,ziyang18},  $m={\it O}(10^4)$. In this case, the associated numerical load needed to compute the covariance matrices increases significantly. Therefore, a method which requires a lower associated numerical load is, of course, preferable.

Here, we wish to illustrate the computational advantage of the three-term PCA represented by (\ref{sjk92}),  (\ref{smi91}) and  (\ref{294b}) compared to that of the GBT2 in (\ref{bnm11}) and the GKLT \cite{924074}. In both methods, a pseudo-inverse matrix is evaluated by the SVD.
The computational load  of the  three-term PCA (abbreviated as $C_{PCA3}$) consists of the matrix products in (\ref{294b}), and computation of SVDs for $E_{yy}^\dag\in\rt^{n\times n}$, $E_{ss}^\dag\in\rt^{\ell\times \ell}$ and $G_z\in\rt^{m\times m}$. Recall that $E_{zz}^\dag = \left[ \begin{array}{cc}
                         E_{yy}^\dag  & \oo\\
                         \oo  & E_{ss}^\dag
                           \end{array} \right]$.
The GBT2 computational load ($C_{GBT2}$) consists of computation of the matrix products   in (\ref{ac11}) and  the SVD for  $E_{qq}^\dag\in\rt^{2n\times 2n}$. The computational load of the GKLT ($C_{GKLT}$) contains computation of the matrix products given in \cite{924074}, and computation of the SVDs for $2n\times 2n$ and $m\times 2n$ matrices.
Importantly, the dimensions of matrices in the  three-term PCA are less than those in the GBT2 and GKLT. For example, the dimension of $E_{yy}$ is twice less than that of $E_{qq}$. This circumstance implies  the decrease in the computational load of the   three-term PCA compared to that in the GBT2 and GKLT. Indeed, the product of $m\times n$ and $n\times p$ matrices requires approximately $2mnp$ flops, for large $n$. The Golub-Reinsch SVD method (as given in \cite{golub1996}, p. 254) requires $4mn^2 + 8n^3$ flops to compute the pseudo-inverse for a $m\times n$ matrix\footnote{The Golub-Reinsch SVD method appears to be more effective than other related methods considered in \cite{golub1996}.}. As a result, for $m=n=\ell$,
\begin{eqnarray}\label{29cn3}
C_{PCA3} = 52m^3 + 2m^2(k+1)
\end{eqnarray}
while
\begin{eqnarray}\label{88cn3}
C_{GBT2} = 140m^3 + 2m^2(k+2) \mbox{ and } C_{GKLT} = 240m^3 + 4m^2(k+1)+mk .
\end{eqnarray}
That is, for large $m$ and $n$, $C_{PCA3}$ is about $37\%$ of  $C_{GBT2}$ and  $22\%$ of  $C_{GKLT}$ .

Thus, the three-term PCA may provide a better associated accuracy than that of the GBT2 and GKLT (see Section \ref{1n8an},  Example \ref{m2b9} and Section \ref{58b29}) under the computational load which, for large $m, n$, is about one third of $C_{GBT2}$ and a quarter of $C_{GKLT}$.
This observation is illustrated by the following numerical example.

\begin{example}\label{nm498}
Let ${\bf y}=A{\bf x}+\xi$ where ${\bf x}\in L^2(\Omega,\mathbb{R}^{m})$ is a uniformly distributed random vector, $\xi\in L^2(\Omega,\mathbb{R}^{m})$ is a Gaussian random vector with variance one and $A\in\mathbb{R}^{m\times m}$ is a matrix with normally distributed random entries. We choose  ${\bf w}\in L^2(\Omega,\mathbb{R}^{\ell})$ as an  uniformly distributed random vector.
Covariance matrices $E_{xx}$, $E_{ww}$ and $E_{\xi\xi}$ are represented by $E_{yy}=\frac{1}{s} YY^T,\quad E_{ww}=\frac{1}{s}WW^T, \quad E_{\xi\xi}=\sigma^2 I,$ where  $Y\in\mathbb{R}^{m\times p}$ and $W\in\mathbb{R}^{m\times p}$ are corresponding sample matrices, $p$ is a number of samples, and  $\sigma=1$. Suppose only samples of $\y$ and $\w$ are available, and for simplicity let us assume that matrix $A$ is invertible. Then, in particular,
$
E_{xx} = A^{-1}(E_{yy} - E_{\xi\xi})A^{-T}\qa    E_{xy}=E_{xx} A^T.
$
\begin{figure}[h!]
  \centering
  \includegraphics[scale=0.60]{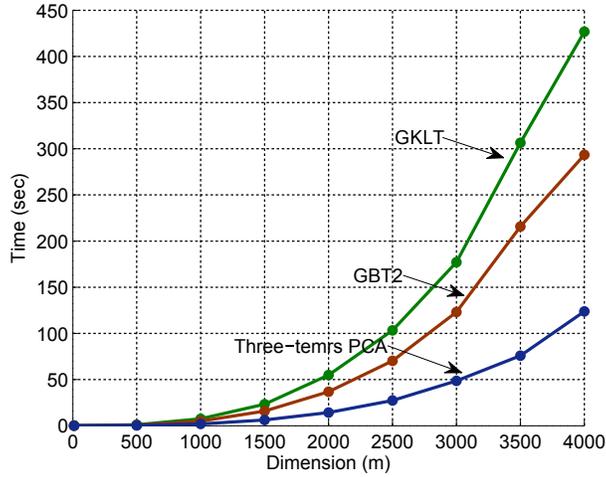}\\
  \caption{Example \ref{nm498}: Time versus matrix dimension $m$ used to execute the three-term PCA  (blue line), GBT2 (red line) and GKLT (green line). }
  \label{fig2}
\end{figure}

In Fig. \ref{fig2}, for a randomly chosen $A$, and $m=\ell$ and $p=3m$,  typical diagrams of time (in sec.) used to execute the three-term PCA in form (\ref{sjk92}) and (\ref{294b}), GBT2   (\ref{bnm11})-(\ref{byfx}) and GKLT \cite{924074} versus  dimension $m$  are represented. The diagrams in Fig. \ref{fig2} confirm the observation made before this example, i.e.,
for large $m$, $C_{PCA3}$  is significantly less than  $C_{GBT2}$  and $C_{GKLT}$ (see (\ref{29cn3}) and  (\ref{88cn3}), respectively).

\end{example}

\section{Final discussion}\label{nbm198}

We have developed the extension of PCA called the three-term PCA. The three-term PCA is presented  in two forms considered in Sections \ref{w0an}, \ref{det1},  and \ref{xvb8an}. The associated advantages have been detailed in Sections \ref{x78an}  and \ref{x78kl}.
We would like to highlight the following observation. The proposed three-term PCA is applied to observed data represented by random vector $\y$. Here, $\y$ is a noisy version of original data $\x$. It is shown that in this case, the error associated with the three-term PCA, $\varepsilon_{m,n,\ell+\eta} (S_0, S_1) $,   is less than that of the known generalizations of PCA, i.e. the GBT1, GKLT and GBT2 (see Section \ref{1n8an}).  At the same time, in the ideal case of the observed data {\em without any noise}, i.e. when $\y=\x$, the three-term PCA coincides with the GBT1 (with $\y=\x$ in the GBT1 as well) which is a generalization of PCA for the case of singular data. Its associated error is minimal among all transforms of the same rank and cannot be improved.
  Let us denote this error by $\varepsilon_{(y=x)} $.
 Example \ref{m2b9} above has discovered an important and quite unanticipated feature of the proposed technique as follows. As dimension $\eta$ of $\h$-injection increases, the error $\varepsilon_{m,n,\ell+\eta} (S_0, S_1) $ decreases  up to  $\varepsilon_{(y=x)} $. This implies a conjecture    that this is true in general, i.e.
 $
 \lim_{\eta \rightarrow \infty} \varepsilon_{m,n,\ell+\eta} (S_0, S_1) = \varepsilon_{(y=x)}.
 $
 We intend to develop a justification of this  observation.

\bibliographystyle{plain}

\bibliography{main}

\end{document}